\theoremstyle{thmstyleone}%
\newtheorem{theorem}{Theorem}%  meant for continuous numbers
\theoremstyle{thmstyletwo}%
\newtheorem{remark}{Remark}%
\theoremstyle{thmstylethree}%
\newtheorem{definition}{Definition}%
\newtheorem{Cor}{Corollary}
\begin{document}

\title[Variation of entropy in the  Duffing system with the amplitude of the external force]{Variation of entropy in the  Duffing system with the amplitude of the external force}

%%=============================================================%%
%% GivenName	-> \fnm{Joergen W.}
%% Particle	-> \spfx{van der} -> surname prefix
%% FamilyName	-> \sur{Ploeg}
%% Suffix	-> \sfx{IV}
%% \author*[1,2]{\fnm{Joergen W.} \spfx{van der} \sur{Ploeg} 
%%  \sfx{IV}}\email{iauthor@gmail.com}
%%=============================================================%%

\author[1]{\fnm{Junfeng} \sur{Cheng}}\email{chengjf@hust.edu.cn}

\author*[1,2]{\fnm{Xiao-Song} \sur{Yang}}\email{yangxs@hust.edu.cn}
%\equalcont{These authors contributed equally to this work.}

\affil*[1]{\orgdiv{School of Mathematics}, \orgname{Huazhong University of Science and Technology}, \orgaddress{\city{Wuhan}, \postcode{430074}}}

\affil[2]{\orgname{Hubei Key Laboratory of Engineering Modeling and Scientific Computing}, \orgaddress{ \city{Wuhan}, \postcode{430074}}}

%\affil[3]{\orgdiv{Department}, \orgname{Organization}, \orgaddress{\street{Street}, \city{City}, \postcode{610101}, \state{State}, \country{Country}}}

\abstract{In this paper, we revisit the well-known perturbed Duffing system and investigate its chaotic dynamics by means of numerical Runge--Kutta method based on topological horseshoe theory.
	Precisely, we investigate chaos through the topological horseshoes associated with the first, second, and third return maps, obtained by varying the amplitude of an external force term while keeping all other parameters fixed.
	Our new finding demonstrates that, when the force amplitude exceeds a certain value, the topological (Smale) horseshoe degenerates into a pseudo-horseshoe, while chaotic invariant set persists. This phenomenon indicates that the lower bound of the topological entropy decreases as the force amplitude increases, thereby enriching the dynamics in the perturbed Duffing system.
	
	Furthermore, we identify a critical value of the force amplitude governing the attractivity of the chaotic invariant set. For amplitudes slightly below this value, the basin of attraction of the chaotic invariant set progressively shrinks as the amplitude increases. In contrast, for larger amplitudes, both Lyapunov exponents become negative while the topological horseshoe persists, suggesting that the chaotic invariant set loses attractivity as the amplitude grows.}

\keywords{Duffing system, topological horseshoe, Lyapunov exponent, attraction basin}

%%\pacs[JEL Classification]{D8, H51}

%%\pacs[MSC Classification]{35A01, 65L10, 65L12, 65L20, 65L70}

\maketitle

\section{Introduction}\label{sec1}

The study of nonlinear dynamical systems has attracted considerable attention across various scientific and engineering disciplines. Among these systems, the Duffing oscillator stands out as a typical model used to investigate nonlinear behaviors. Originally introduced by Georg Duffing in the early 20th century, the Duffing system has since been instrumental in exploring complex phenomena, including bifurcations and chaos.

Classical results on the Duffing system can be found in the books of J.K. Hale \cite{Hale1980}, Guckenheimer and Holmes \cite{Guckenheimer1984}. In Ref.~\cite{Guckenheimer1984}, the Melnikov method,  together with the Smale-Birkhoff theorem \cite{Smale1965,Smale1967}, has confirmed that under certain conditions, the perturbed Duffing system exhibits chaotic dynamics,  as outlined in Section \ref{DSMM}. 
Recent investigation of the sinusoidally forced Duffing oscillator \cite{Abohamer2025} employed perturbation techniques in conjunction with bifurcation diagrams, Poincar\'e sections, and Lyapunov exponents to characterize transitions to chaos, thereby producing detailed parameter maps and insights of relevance to engineering applications.
In Ref.~\cite{Cheng2025b}, it was further verified, via the topological horseshoe theory, that the Melnikov criterion is sufficient but not necessary for the existence of chaotic dynamics.
In this paper, however, we attempt to use similar methodology to provide a more detailed description of the influence brought by the external force term.

The exploration into topological horseshoe theory was initiated by the proposal of the concept of the Smale horseshoe \cite{Smale1967}.  After that, the notion of the topological horseshoe was introduced by Kennedy and Yorke\citep{Kennedy2001-1,Kennedy2001-2}. Subsequently, Yang and Tang enhanced this concept in Ref.~\cite{Yang2004} to deal with piecewise continuous maps. With this theorem, we can use numerical methods, such as the Runge-Kutta method, to study chaos effectively.

Furthermore, we investigate how the attractivity of the chaotic invariant set varies with the external force in the Duffing system. Specifically, we utilize Lyapunov exponents, a powerful tool for detecting and quantifying chaotic behavior, to demonstrate that as the amplitude of the external force  increases, notable changes in the attractivity of the chaotic set emerge.
In addition, we propose a classification procedure inspired by Cauchy's convergence criterion to further illustrate the intricate geometric structure of the corresponding attraction basin.

The paper is organized as follows. 	Sec.~\ref{DSMM} reviews the Duffing system together with a classical result obtained via the Melnikov method. Sec.~\ref{THT} provides an overview of the topological horseshoe theory. 
Sec.~\ref{experiment1} presents our analysis of the Smale horseshoes and pseudo-horseshoes associated with the first, second, and third return maps, showing that the lower bound of the topological entropy decreases as the amplitude of the external force increases. 
Sec.~\ref{experiment2} investigates how external force affects the attractivity of the chaotic invariant set, identifying a critical threshold.
When the force amplitude is slightly below this threshold, the basin of attraction of the chaotic invariant set gradually decreases as the amplitude increases. Once the amplitude exceeds the threshold, however, the chaotic invariant set ceases to be attractive, and a stable harmonic solution appears.
Finally, Sec.~\ref{conclusion} summarizes the main findings of our study.

\section{Preliminaries}\label{preliminaries}

\subsection{Duffing system and a classical result obtained by the Melnikov method}\label{DSMM}

The Duffing system is governed by the following second-order nonlinear differential equation:
\begin{eqnarray}\label{Duffing-1}
	\ddot{x} + \delta \dot{x} + \alpha x + \beta x^3 = \gamma \cos(\omega t),
\end{eqnarray}  
where \( x(t) \) represents displacement, \( \delta \) is the damping coefficient, \( \alpha \) and \( \beta \) represent linear and nonlinear stiffness coefficients, respectively, \( \gamma \) denotes the amplitude of the external driving force, and \( \omega \) is the angular frequency of the applied force.

For the purposes of this paper, we fix \( \alpha = -1 \) and \( \beta = 1 \), transforming \eqref{Duffing-1} into:
\begin{eqnarray}\label{Duffing-2}
	\ddot{x} + \delta \dot{x} - x +  x^3 = \gamma \cos(\omega t).
\end{eqnarray} 
By introducing \( y = \dot{x} \) and a phase variable \( \theta \), we can reformulate the system into a autonomous system:
\begin{eqnarray}\label{Duffing3}
	\left\{\begin{array}{l}
		\dot{x}=y,\\
		\dot{y}=x-x^3-\delta y+\gamma \cos(\theta),\qquad(x,y,\theta)\in\mathbb{R}^2\times S^1,\\
		\dot{\theta}=\omega.
	\end{array}\right.
\end{eqnarray}

By introducing $\varepsilon$, and letting $\delta=\varepsilon\hat{\delta}, \gamma=\varepsilon\hat{\gamma}$, we have the classical result obtained by the Melnikov method, as presented in the book by Wiggins \cite{Wiggins2013}. To be specific, consider the following system.
\begin{eqnarray}\label{Duffing3.1}
	\left\{\begin{array}{l}
		\dot{x}=y,\\
		\dot{y}=x-x^3+\varepsilon (-\hat{\delta} y+ \hat{\gamma} \cos(\theta)),\qquad(x,y,\theta)\in\mathbb{R}^2\times S^1,\\
		\dot{\theta}=\omega.
	\end{array}\right.
\end{eqnarray}

As noted in Ref.~\cite{Wiggins2013}, a sufficient condition for the stable and unstable manifolds to intersect transversely is
\begin{eqnarray}\label{MMc}
	\hat\delta  < \left( \frac{{3\pi \omega }\operatorname{sech}\frac{\pi \omega }{2}}{2\sqrt{2}}\right) \hat\gamma.  
\end{eqnarray}
Since
$\delta=\varepsilon\hat{\delta}, \gamma=\varepsilon\hat{\gamma}$, condition \eqref{MMc} is equivalent to 
\begin{eqnarray}\label{MMcc}
	\delta  < \left( \frac{{3\pi \omega }\operatorname{sech}\frac{\pi \omega }{2}}{2\sqrt{2}}\right) \gamma.  
\end{eqnarray}
By the Smale--Birkhoff theorem \cite{Guckenheimer1984}, when \eqref{MMcc} holds,  some iterate of the Poincar\'e map  exhibits a Smale horseshoe. Consequently, a topological horseshoe exists, implying the presence of chaotic dynamics in the system.

\subsection{Topological Horseshoe}\label{THT}

In this section, we present a review of topological horseshoes for the purposes of this paper.
Prior to exploring the concept of topological horseshoes, it is crucial to introduce the definition of semi-conjugation. The subsequent two definitions are referenced from Ref.~\cite{Wiggins1988}.

\begin{definition} \cite{Wiggins1988}
	Let $M$ and $N$ be topological spaces and consider two continuous maps $f:M\to M$ and $g:N\to N$. The map $f$ is said to be semi-conjugate to $g$ if there is a continuous surjective map $h:M\to N $  such that 
	\begin{eqnarray}
		h\circ f=g\circ h.
	\end{eqnarray}
\end{definition}

\begin{definition} \cite{Wiggins1988}
	Let $X$ be a metric space, and let $\mathcal{A}=\{0,1,2,\dots,m-1\}$ be a finite alphabet with $m$ symbols. 
	The $m$-shift map $\sigma$ is defined as 
	\begin{eqnarray}
		\sigma(s)_i=s_{i+1},
	\end{eqnarray}
	where $s\in\mathcal{A}^\mathbb{Z}=\{x=(x_i)_{i\in\mathbb{Z}}:x_i\in\mathcal{A}~\text{for all } i\in\mathbb{Z}\}$.
	
	Consider a  (piecewise) continuous map $f:X\to X$. If there exists a compact invariant set $\Lambda \subset X $ such that the restriction of $f$ to $\Lambda$ is semi-conjugate to the m-shift map $\sigma $,
	then $f$ is said to have an m-type topological horseshoe.  
\end{definition}

For the practical identification of horseshoes within applied problems, we also need to introduce the concept of crossing, detailed in Ref.~\cite{Yang2009}. Let us consider a compact and connected region \(D \subset \mathbb{R}^n\). Let \(B_i\) (\(i = 1, 2, \ldots, m\)) be compact, path-connected subsets  in \(D\), each homeomorphic to the unit cube. Denote the boundary of each set \(B_i\) by \(\partial B_i\), and consider a piecewise continuous map $f: D \rightarrow X$, which is continuous on each of the compact sets $B_i$.

\begin{definition} \citep{Yang2009}
	For each $B_i, 1\leq i\leq m$, let $B_i^1$ and $B_i^2$ be two fixed disjoint connected nonempty compact subsets (usually pieces of $\partial B_i$) contained in the boundary $\partial B_i$. A connected subset $l$ of $B_i$ is said to be a connection of $B_i^1$ and $B_i^2$ if $l\cap B_i^1\neq \emptyset$ and $l\cap B_i^2\neq \emptyset$. 
\end{definition}

\begin{definition}\label{thd} \citep{Yang2009}
	Let $l\subset B_i$ be a connection of $B_i^1$ and $B_i^2$. We say that $f(l)$ is crossing $B_j$, if $l$ contains a connected subset $\bar{l}$ such that $f(\bar{l})$ is a connection of $B_j^1$ and $B_j^2$, i.e., $f(\bar{l})\subset B_j$, while $f(\bar{l})\cap B_j^1\neq \emptyset$ and $f(\bar{l})\cap B_j^2\neq \emptyset$. In this case we denote it by $f(l)\mapsto B_j$. Furthermore, if $f(l) \mapsto B_j $ for every connection $l$ of $B_i^1$ and $B_i^2$, then $f(B_i)$ is said to be crossing $B_j$ and denoted by $f(B_i)\mapsto B_j$. To simplify the terminology, we refer to $B_i$ as a crossing block of $B_j$, and for clarity, we call this crossing the dimension one crossing.
\end{definition}

In light of the aforementioned definitions, we recall the following theorem.
\begin{theorem}\label{thm1} \citep{Yang2004} 
	Suppose that the map $f:D\to \mathbb{R}^n$ satisfies the following assumptions:\\		
	\noindent(1) There exist $m$ mutually path-connected disjoint compact subsets $B_1,B_2,\dots$ and $B_m$ of $D$, the restriction of $f$ to each $B_i$, i.e., $f|_{B_i}$ is continuous.\\		
	\noindent (2) The dimension one crossing relation $f(B_i)\mapsto B_j $ holds for $1\leq i,j\leq m$.\\	
	Then there exists a compact invariant set $K\subset D$, such that $f|_K$ is semi-conjugate to a m-shift map.
\end{theorem}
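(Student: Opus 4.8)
The plan is to realize the semi-conjugacy through symbolic itineraries: to each admissible orbit that stays inside $\bigcup_{i}B_i$ I assign the sequence recording which block it occupies at each time, and I show that \emph{every} symbol sequence arises this way. Concretely, I would take the invariant set $K$ to be the base points of all bi-infinite orbits $(x_n)_{n\in\mathbb{Z}}$ with $f(x_n)=x_{n+1}$ and $x_n\in\bigcup_i B_i$, and let $h$ send such an orbit to the sequence $s=(s_n)_{n\in\mathbb{Z}}$ determined by $x_n\in B_{s_n}$ (well defined because the $B_i$ are disjoint). Granting that $K$ is nonempty, compact and invariant, the intertwining $h\circ f=\sigma\circ h$ and the continuity of $h$ in the product topology are routine, so the theorem reduces to the \emph{realization statement}: for every $s\in\mathcal{A}^{\mathbb{Z}}$ there is an orbit in $\bigcup_i B_i$ with itinerary $s$. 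This is where assumptions (1) and (2) enter.

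The core step is forward realization, which I would prove by a nested-connection construction. Fix a forward itinerary $(s_0,s_1,\dots)$. Since $B_{s_0}$ is path-connected (assumption (1)) and $B_{s_0}^1,B_{s_0}^2$ are nonempty, a path joining a point of $B_{s_0}^1$ to a point of $B_{s_0}^2$ furnishes an initial connection $l_0$. Using $f(B_{s_0})\mapsto B_{s_1}$, which holds for all pairs by assumption (2), the connection $l_0$ contains a connected subset $\bar l_0$ with $f(\bar l_0)$ a connection of $B_{s_1}^1,B_{s_1}^2$; setting $l_1:=f(\bar l_0)$ and iterating yields connections $l_k\subseteq B_{s_k}$ and connected sets $\bar l_k\subseteq l_k$ with $l_{k+1}=f(\bar l_k)$. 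A short induction gives $f^{k}(\bar l_0)\supseteq\bar l_k$, and tracing a preimage of any point of $\bar l_n$ backwards through $\bar l_{n-1},\dots,\bar l_0$ shows that the nested compact sets
\begin{equation*}
E_n:=\bar l_0\cap f^{-1}(\bar l_1)\cap\cdots\cap f^{-n}(\bar l_n)
\end{equation*}
are nonempty. Any point of $\bigcap_{n\ge 0}E_n$, nonempty as a nested intersection of nonempty compacta, then satisfies $f^k(x)\in B_{s_k}$ for all $k\ge 0$, realizing the forward itinerary.

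To upgrade this to a bi-infinite orbit realizing a two-sided sequence $s\in\mathcal{A}^{\mathbb{Z}}$, I would apply the forward construction to the windows starting at time $-N$: for each $N$ this produces a point whose orbit reads $s_{-N},s_{-N+1},\dots$, and pushing it forward $N$ steps gives a point $q_N\in B_{s_0}$ with $f^k(q_N)\in B_{s_k}$ for all $k\ge -N$, together with a finite backward orbit recording $s_{-1},\dots,s_{-N}$. A diagonal compactness argument over the compact blocks $B_{s_{-m}}$ then extracts a limiting bi-infinite orbit with itinerary exactly $s$; closedness of the $B_i$ and continuity of $f$ guarantee that the limit orbit stays in the prescribed blocks. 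This yields surjectivity of $h$ onto $\mathcal{A}^{\mathbb{Z}}$ and completes the semi-conjugacy.

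I expect the main obstacle to be precisely this passage from forward to bi-infinite itineraries, because $f$ is only assumed (piecewise) continuous and need not be invertible: a single point cannot by itself encode backward symbols, so the coding is naturally carried on the space of orbits (equivalently, the natural extension), and the existence of consistent backward orbits must be obtained by the compactness/diagonal selection rather than by a direct pull-back. The forward step, by contrast, is robust; its only delicate point is verifying the nonemptiness of the sets $E_n$, which follows from the surjectivity built into the crossing relation and does \emph{not} require connectedness of preimages.
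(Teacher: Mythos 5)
First, a point of reference: the paper does not prove Theorem~\ref{thm1} at all --- it imports the result from Ref.~\cite{Yang2004} --- so your proposal can only be compared against the original argument of Yang and Tang. Your forward-realization step is essentially that argument, and it is correct: path-connectedness gives an initial connection, the crossing hypothesis lets you nest connections $\bar l_k$ with $l_{k+1}=f(\bar l_k)$, and each $E_n$ is nonempty because any point of $\bar l_n$ can be pulled back step by step through $\bar l_{n-1},\dots,\bar l_0$. One small repair is needed: Definition~\ref{thd} does not require connections (or the sets $\bar l_k$) to be closed, so the $E_n$ need not be compact as written; replace each $\bar l_k$ by its closure, noting that $f(\mathrm{cl}\,\bar l_k)=\mathrm{cl}\,f(\bar l_k)$ on the compact block and that closures of connections are still connections, and interpret $f^{-k}$ as the preimage under the continuous composition $f|_{B_{s_{k-1}}}\circ\cdots\circ f|_{B_{s_0}}$. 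With these adjustments the infinite intersection is nonempty and the forward itinerary is realized.

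The genuine gap is the second half, and you have identified it yourself without resolving it. The theorem asserts that $f|_K$ --- a map on a compact set $K\subset D$ of \emph{points} --- is semi-conjugate to the $m$-shift, but your $h$ is defined on the space of bi-infinite orbits (the natural extension of $f|_K$), because for non-invertible $f$ a point does not determine its backward symbols. What your diagonal argument proves is that the natural extension factors onto the two-sided shift, which is a different statement. Moreover, this cannot be repaired while keeping the two-sided shift of the paper's Definition~2: take $D=[0,1]$, $f(x)=3x \bmod 1$ (defined piecewise), $B_1=[0,1/3]$, $B_2=[2/3,1]$, with the $B_i^1,B_i^2$ the endpoint pairs. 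Hypotheses (1)--(2) hold, your $K$ is the middle-thirds Cantor set, and $f|_K$ is topologically conjugate to the one-sided full $2$-shift. If some $h:K\to\mathcal{A}^{\mathbb{Z}}$ were continuous, surjective, and satisfied $h\circ f=\sigma\circ h$, then injectivity of $\sigma^n$ would force $h(x)=h(y)$ whenever $f^n(x)=f^n(y)$; each such tail class is dense in $K$, so continuity would make $h$ constant, contradicting surjectivity. The resolution is that for piecewise-continuous (possibly non-invertible) maps the $m$-shift in Theorem~\ref{thm1} must be the \emph{one-sided} shift on $\mathcal{A}^{\mathbb{N}}$ --- this is how the result is formulated and proved in Ref.~\cite{Yang2004}, and the two-sided Definition~2 of the present paper is loose on this point. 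With that reading, your forward construction together with the usual point-itinerary map ($h(x)_k$ being the index of the block containing $f^k(x)$, continuous because the $B_i$ are disjoint compact sets, surjective by your forward realization) already completes the proof; the bi-infinite/diagonal passage should be deleted rather than fixed. The two-sided coding is legitimate only when $f|_K$ is invertible, e.g.\ for the Poincar\'e maps used elsewhere in this paper.
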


The Smale horseshoe mentioned in Section \ref{DSMM} is a special case of topological horseshoes.
It is obvious that the existence of a Smale horseshoe results in the existence of a topological horseshoe consisting of two crossing blocks.

To describe the chaos of a system quantitatively, we recall a theorem on the topological entropy. For more details, readers are referred to Ref.~\cite{Robinson1995}.

\begin{theorem}\label{TE} \citep{Robinson1995}
	Let $X$ be a compact space, and $f:X\to X$ a continuous map. If there exists an invariant set $\Lambda\subset X$ such that $f|_\Lambda$ is semi-conjugate to the m-shift map $\sigma$, then we have
	\begin{eqnarray}
		h(f)\geq h(\sigma)=\log m,
	\end{eqnarray}
	where $h(f)$ denotes the topological entropy of the map $f$. Furthermore, for every positive integer $k$, we have the following fact
	\begin{eqnarray}
		h(f^k)=kh(f).
	\end{eqnarray}
\end{theorem}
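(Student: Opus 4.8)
The plan is to work from the open-cover definition of topological entropy. For a finite open cover $\mathcal{U}$ of the compact space $X$, write $N(\mathcal{U})$ for the minimal cardinality of a subcover, $H(\mathcal{U})=\log N(\mathcal{U})$, and set
\begin{equation}
	h(f,\mathcal{U})=\lim_{n\to\infty}\frac{1}{n}H\Big(\bigvee_{i=0}^{n-1}f^{-i}\mathcal{U}\Big),\qquad h(f)=\sup_{\mathcal{U}}h(f,\mathcal{U}),
\end{equation}
where the limit exists by subadditivity of $n\mapsto H(\bigvee_{i=0}^{n-1}f^{-i}\mathcal{U})$. The first assertion will then be assembled from three standard facts about this quantity, while the second follows from a direct manipulation of the joins $\bigvee_i f^{-i}\mathcal{U}$.

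For the inequality $h(f)\ge\log m$ I would proceed in three steps. First, the hypothesis furnishes a continuous surjection $\pi:\Lambda\to\mathcal{A}^{\mathbb{Z}}$ with $\pi\circ(f|_\Lambda)=\sigma\circ\pi$, so $(\mathcal{A}^{\mathbb{Z}},\sigma)$ is a topological factor of $(\Lambda,f|_\Lambda)$. Since $\pi$ is surjective and intertwines the dynamics, pulling an open cover of $\mathcal{A}^{\mathbb{Z}}$ back through $\pi$ preserves both the minimal-subcover count and the join structure, whence $h(\sigma)\le h(f|_\Lambda)$; that is, factors do not increase entropy. Second, because $\Lambda$ is a closed $f$-invariant subset of the compact space $X$, restricting covers of $X$ to $\Lambda$ yields $h(f|_\Lambda)\le h(f)$ (monotonicity under passage to subsystems). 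Third, I would compute $h(\sigma)=\log m$ directly: with the usual metric on $\mathcal{A}^{\mathbb{Z}}$, a maximal $(n,\varepsilon)$-separated set for small $\varepsilon$ is obtained by prescribing the coordinates in a central window of length $n+O(1)$, so its cardinality grows at exponential rate $\log m$. Chaining the three steps gives $h(f)\ge h(f|_\Lambda)\ge h(\sigma)=\log m$.

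For the power rule $h(f^{k})=kh(f)$ I would exploit the fact that iterating $f^{k}$ while refining by the intermediate iterates of $f$ reproduces the full join. Fix a finite open cover $\mathcal{U}$ and set $\mathcal{V}=\bigvee_{j=0}^{k-1}f^{-j}\mathcal{U}$. A telescoping of the index sets gives
\begin{equation}
	\bigvee_{i=0}^{n-1}(f^{k})^{-i}\mathcal{V}=\bigvee_{l=0}^{nk-1}f^{-l}\mathcal{U},
\end{equation}
from which $h(f^{k},\mathcal{V})=k\,h(f,\mathcal{U})$ (the subsequence $nk\to\infty$ recovers the same limit as the full sequence); taking the supremum over $\mathcal{U}$ yields $h(f^{k})\ge kh(f)$. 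For the reverse inequality I would note that for any cover $\mathcal{V}$ the join $\bigvee_{i=0}^{n-1}(f^{k})^{-i}\mathcal{V}$ is a sub-join of, hence coarser than, $\bigvee_{l=0}^{nk-1}f^{-l}\mathcal{V}$, so its $H$ is no larger; dividing by $n$, passing to the limit, and taking the supremum over $\mathcal{V}$ gives $h(f^{k})\le kh(f)$. The two bounds together yield the equality.

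The soft parts are the two monotonicity statements, both immediate from the behaviour of $N(\mathcal{U})$ under pullback and restriction of covers. The genuine work sits in two places: the explicit computation $h(\sigma)=\log m$, where one must verify that the boundary coordinates of the central window contribute only a subexponential factor so that the growth rate is exactly $\log m$ for every sufficiently small $\varepsilon$; and the reverse inequality of the power rule, where care is needed to justify both the coarsening comparison of the two joins and the interchange of the limit in $n$ with the supremum over covers. I expect the direction $h(f^{k})\le kh(f)$ to be the main obstacle, since it is the one step that genuinely uses the combinatorial structure of the joins rather than a purely monotone argument.
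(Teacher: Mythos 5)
The paper does not actually prove this theorem: it is quoted with a citation to Robinson's book, so there is no in-paper argument to compare against, and your proof must be judged on its own. It is essentially the standard textbook proof and is sound. The factor step is right (and your observation that surjectivity of $\pi$ makes the minimal-subcover count of $\pi^{-1}\mathcal{W}$ equal to that of $\mathcal{W}$ is exactly the needed point), the restriction step is right, and the join-telescoping argument gives both directions of $h(f^k)=k\,h(f)$ correctly; note that the interchange of limit and supremum you worried about never actually arises, since you prove $h(f^k,\mathcal{V})\le k\,h(f)$ for each fixed cover $\mathcal{V}$ and only afterwards take the supremum. Two points should be tightened. First, you compute $h(\sigma)=\log m$ with $(n,\varepsilon)$-separated sets while the rest of the argument uses the open-cover definition; to keep the chain $h(f)\ge h(f|_\Lambda)\ge h(\sigma)$ coherent you must either invoke the equivalence of the Bowen--Dinaburg and Adler--Konheim--McAndrew definitions on the compact metric space $\mathcal{A}^{\mathbb{Z}}$, or compute with covers directly: the time-zero cylinder partition $\mathcal{C}=\{[a]_0 : a\in\mathcal{A}\}$ is a clopen cover, $\bigvee_{i=0}^{n-1}\sigma^{-i}\mathcal{C}$ consists of $m^n$ pairwise disjoint nonempty cylinders so $h(\sigma,\mathcal{C})=\log m$, and every finite open cover is refined (via a Lebesgue number) by a finite-window cylinder cover whose entropy is again $\log m$, which yields the matching upper bound. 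Second, the theorem as stated says only ``invariant set $\Lambda$,'' but your subsystem-monotonicity step needs $\Lambda$ compact; this is harmless in context, since the paper's definition of a topological horseshoe and Theorem 1 produce a compact invariant set, but the assumption should be stated rather than silently used.
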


\section{Variation of entropy in the Duffing system in terms of geometrical structure of horseshoe}\label{experiment1}

It has been verified in Ref.~\cite{Cheng2025b} that the chaotic invariant set may persist even in the absence of the Melnikov criterion. In this paper, however, we investigate how the amplitude of the external force influences the dynamics, employing topological horseshoe theory.
To be specific, we focus on the case with the parameters specified in Ref.~\cite{Guckenheimer1984}:
\begin{eqnarray}
	\omega=1,\delta=0.25.
\end{eqnarray}
Substituting them into \eqref{MMcc}, the constant on the right-hand side of \eqref{MMcc} satisfies
\begin{eqnarray}
	\frac{3\pi \text{sech}(\frac{\pi}{2})}{2\sqrt{2}}\approx 1.327~99.
\end{eqnarray}
Consequently, the Melnikov condition \eqref{MMcc} holds whenever $\gamma > 0.188~254~527. $

Following Ref.~\cite{Guckenheimer1984}, we first consider the case when $\gamma=0.4$. A Smale horseshoe is presented in Figure \ref{fig1:a}, where the quadrilateral \( Q \) is selected with endpoints defined as:
\[Q: A(-0.23,0.05), ~B(0.13,0.40), ~C(0.13,0.20), ~D(-0.23,-0.15).\]
In the figures appearing throughout the remainder of this paper, each side of the quadrilateral is mapped to the side indicated by the corresponding color.
For readers' convenience, we present a schematic diagram for the Smale horseshoe for $\gamma=0.4$ in Figure~\ref{fig1:b}.

\begin{figure}[H]
	\centering
	\includegraphics[width=\textwidth]{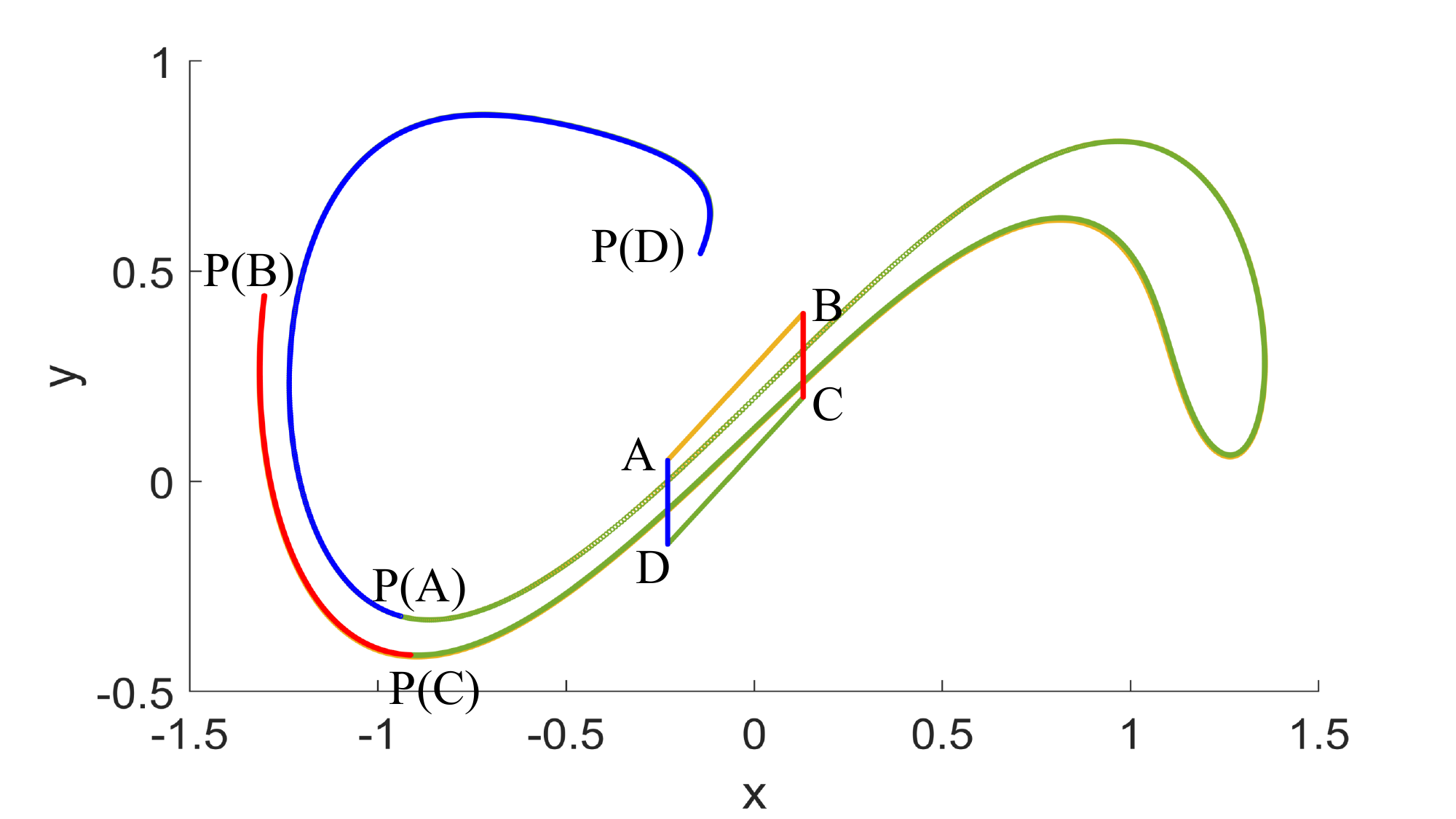}
	\caption{Smale horseshoe of the first return map when $\gamma=0.4$.}
	\label{fig1:a}
\end{figure}

\begin{figure}[H]
	\centering
	\includegraphics[width=\textwidth]{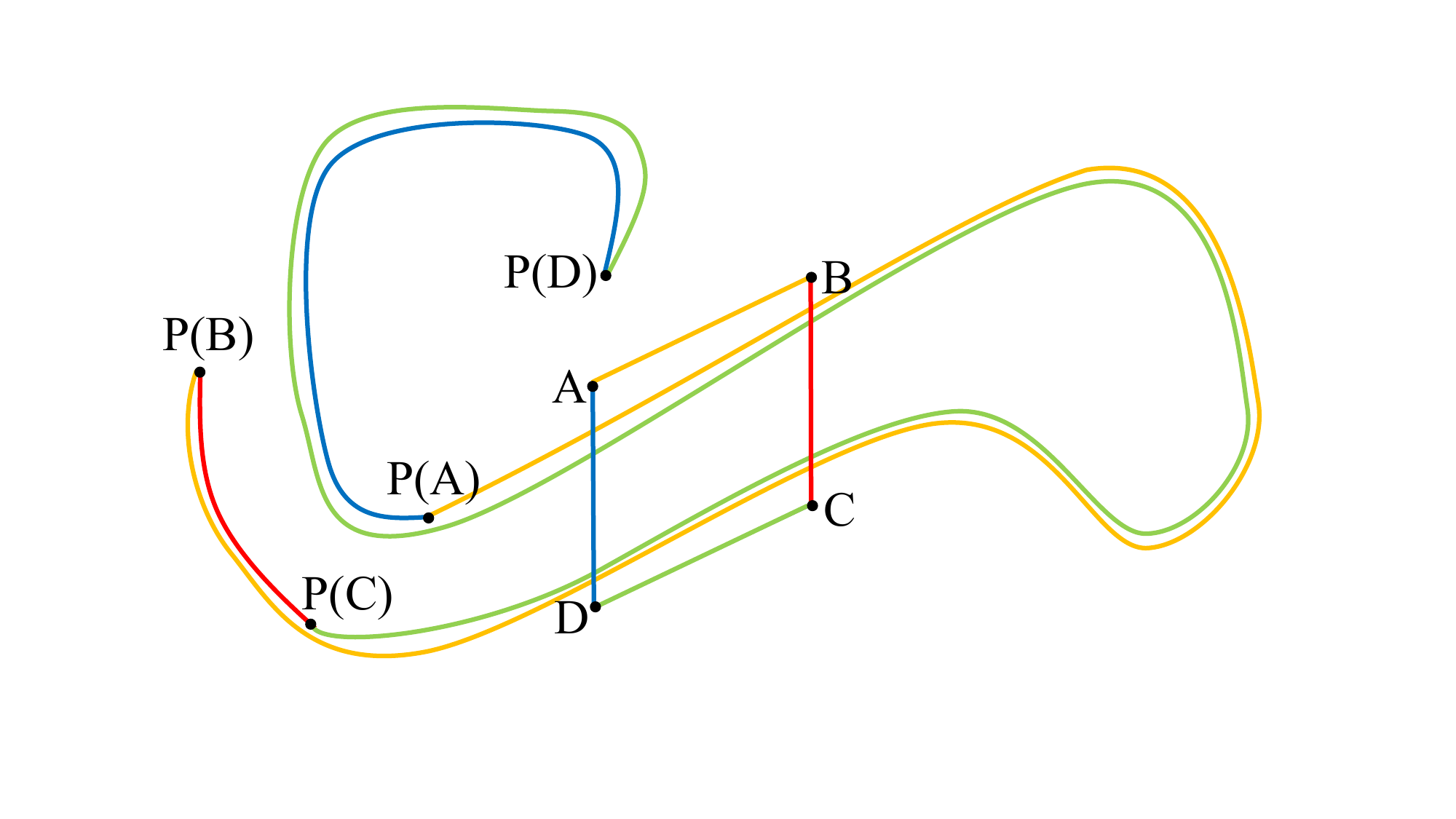}
	\caption{Schematic diagram of the Smale horseshoe in Figure~\ref{fig1:a}.}
	\label{fig1:b}
\end{figure}

Since the Melnikov condition is satisfied when the amplitude of the external force becomes sufficiently large, the existence of a chaotic invariant set is then guaranteed. 
Our interest, however, lies in understanding how the associated horseshoe structure changes as $\gamma$ increases.
To this end, we conduct a series of numerical experiments to identify the Smale horseshoe of the first return map, and several representative examples are presented in Figure~\ref{fig2:a}--\ref{fig2:h}.
\begin{figure}[H]
	\centering
	\includegraphics[width=\textwidth]{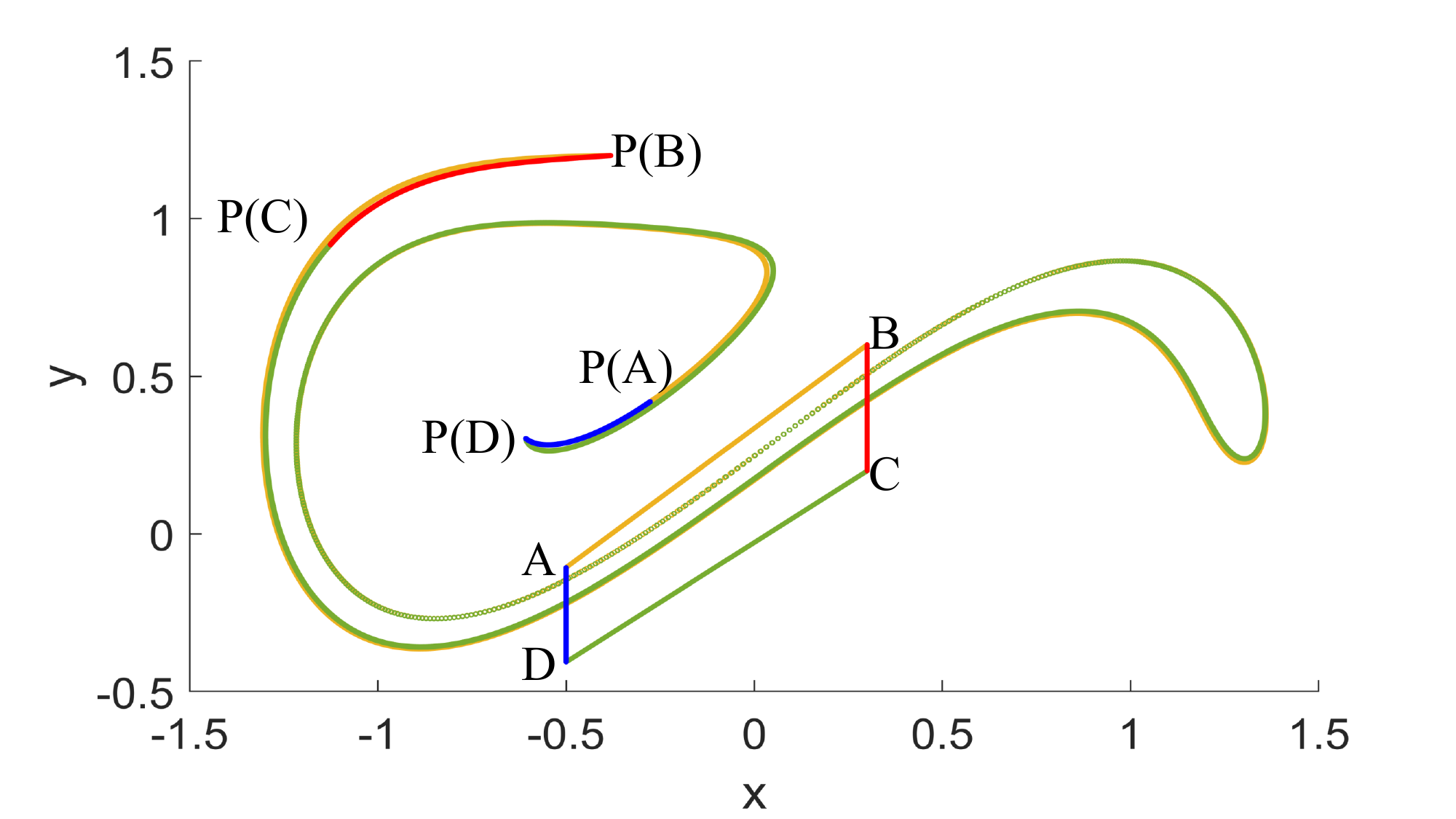}
	\caption{Smale horseshoe of the first return map when $\gamma=0.5$.}
	\label{fig2:a}
\end{figure}
\begin{figure}[H]
	\centering
	\includegraphics[width=\textwidth]{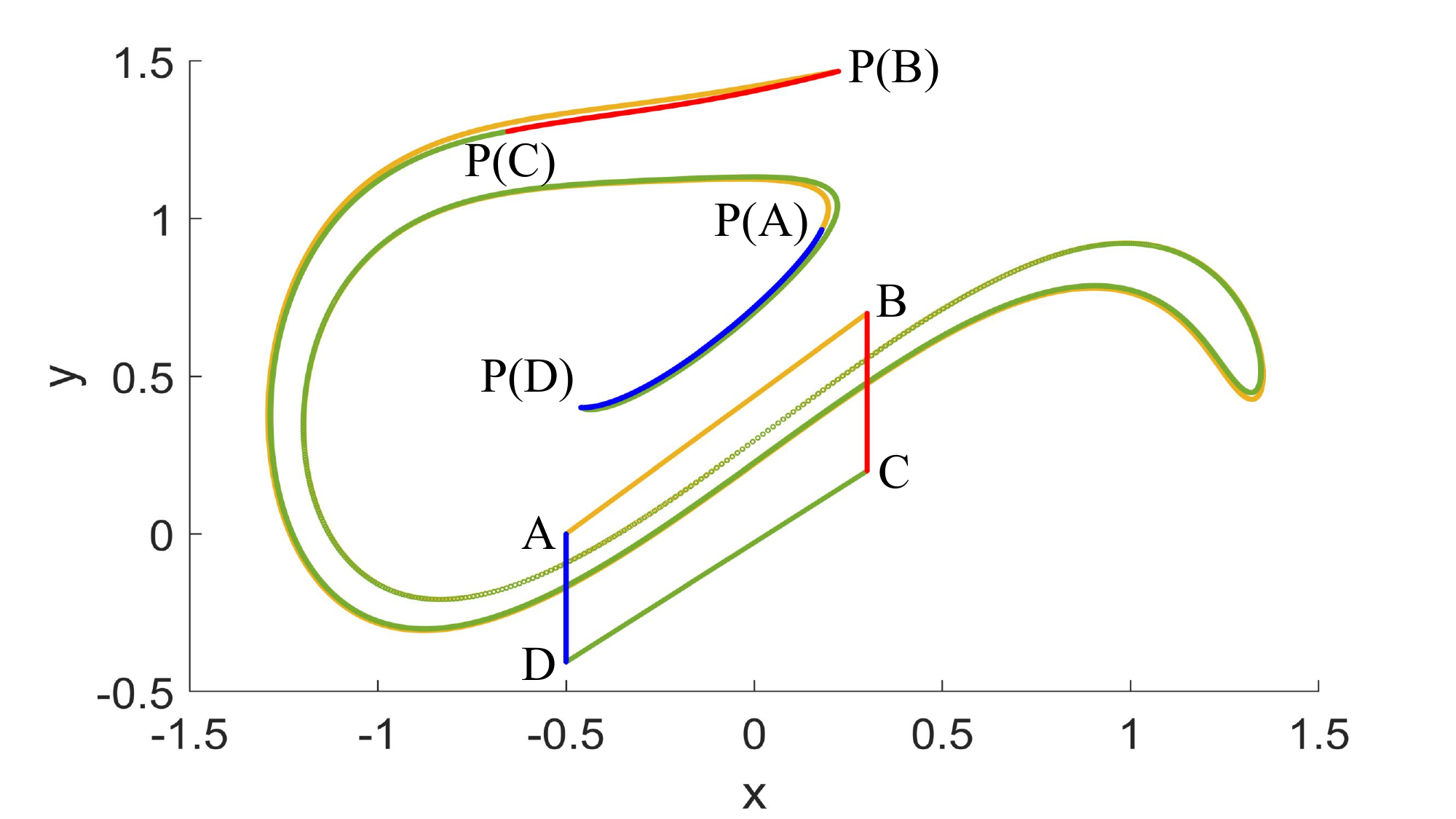}
	\caption{Smale horseshoe of the first return map when $\gamma=0.6$.}
	\label{fig2:b}
\end{figure}
\begin{figure}[H]
	\centering
	\includegraphics[width=\textwidth]{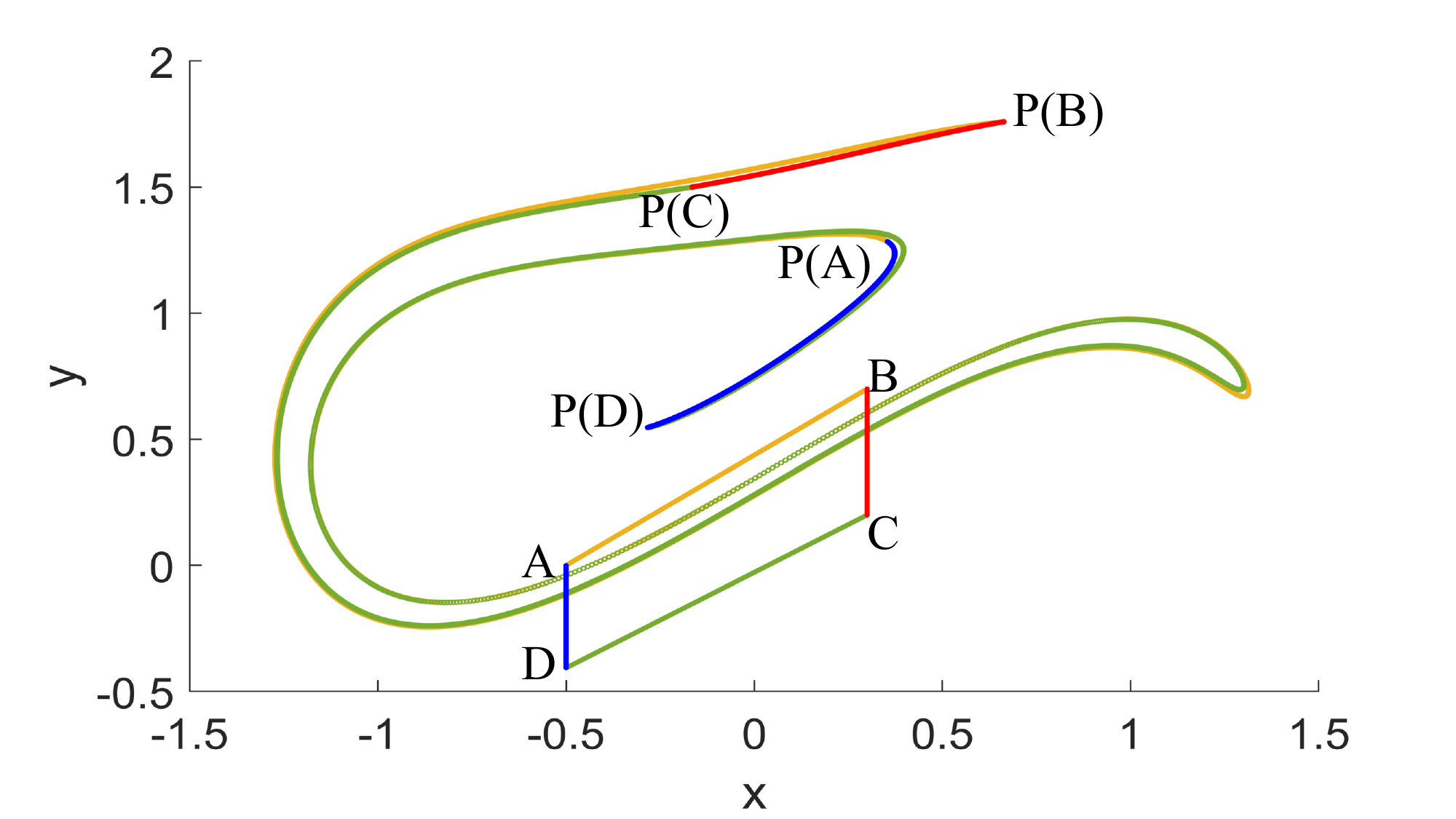}
	\caption{Smale horseshoe of the first return map when $\gamma=0.7$.}
	\label{fig2:c}
\end{figure}
\begin{figure}[H]
	\centering
	\includegraphics[width=\textwidth]{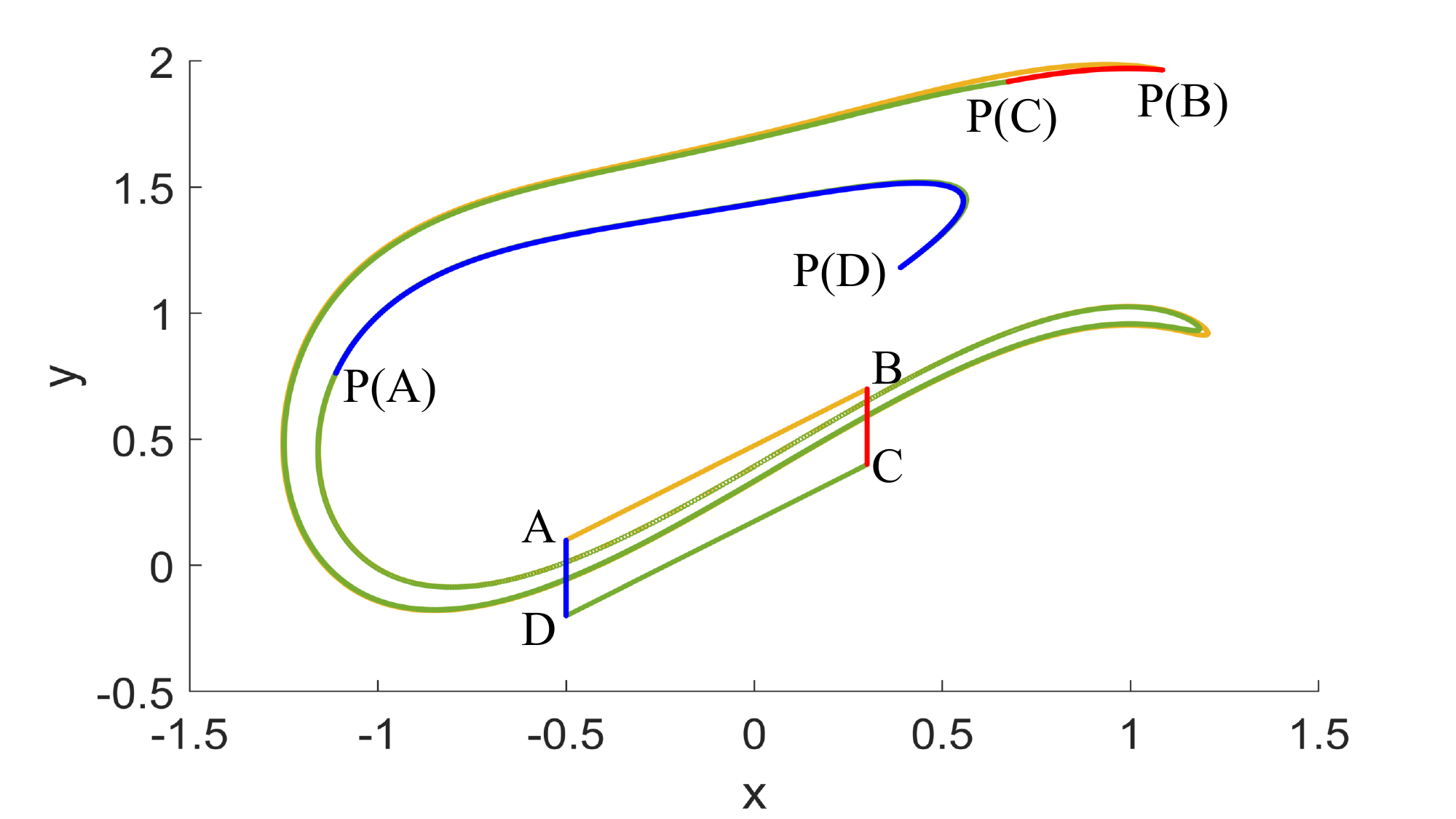}
	\caption{Smale horseshoe of the first return map when $\gamma=0.8$.}
	\label{fig2:d}
\end{figure}
\begin{figure}[H]
	\centering
	\includegraphics[width=\textwidth]{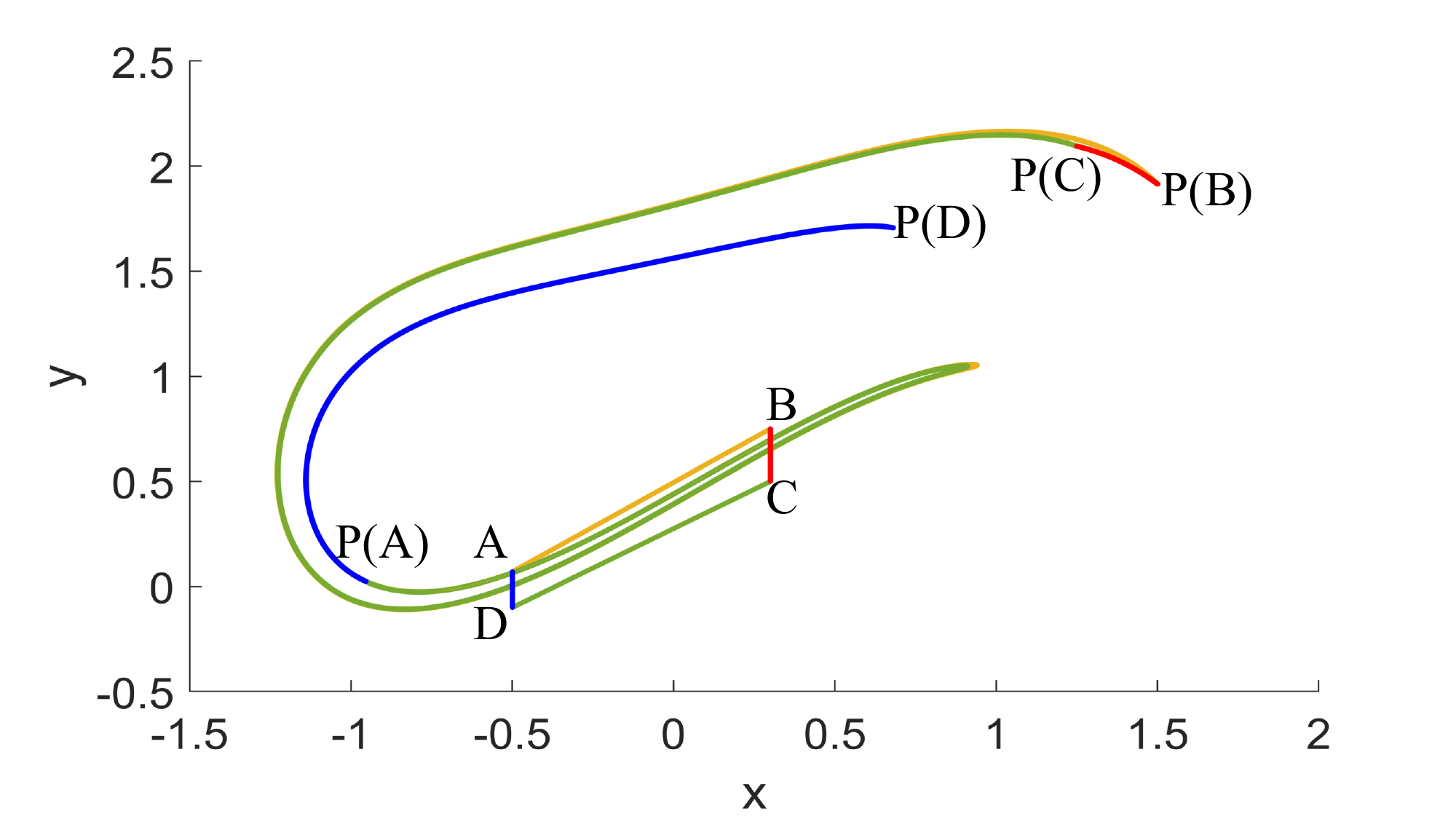}
	\caption{Smale horseshoe of the first return map when $\gamma=0.9$.}
	\label{fig2:e}
\end{figure}
\begin{figure}[H]
	\centering
	\includegraphics[width=\textwidth]{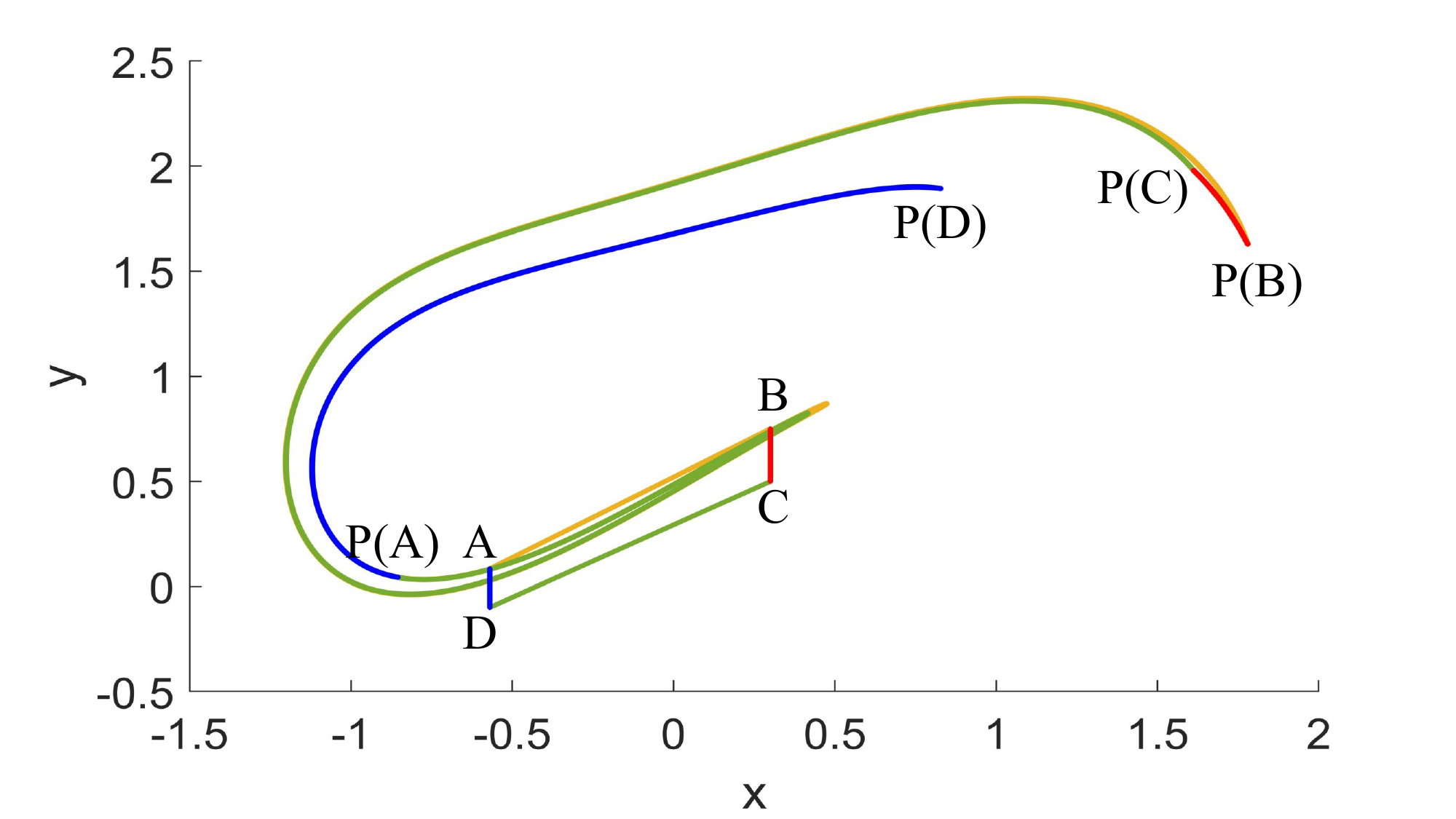}
	\caption{Smale horseshoe of the first return map when $\gamma=1.0$.}
	\label{fig2:f}
\end{figure}
\begin{figure}[H]
	\centering
	\includegraphics[width=\textwidth]{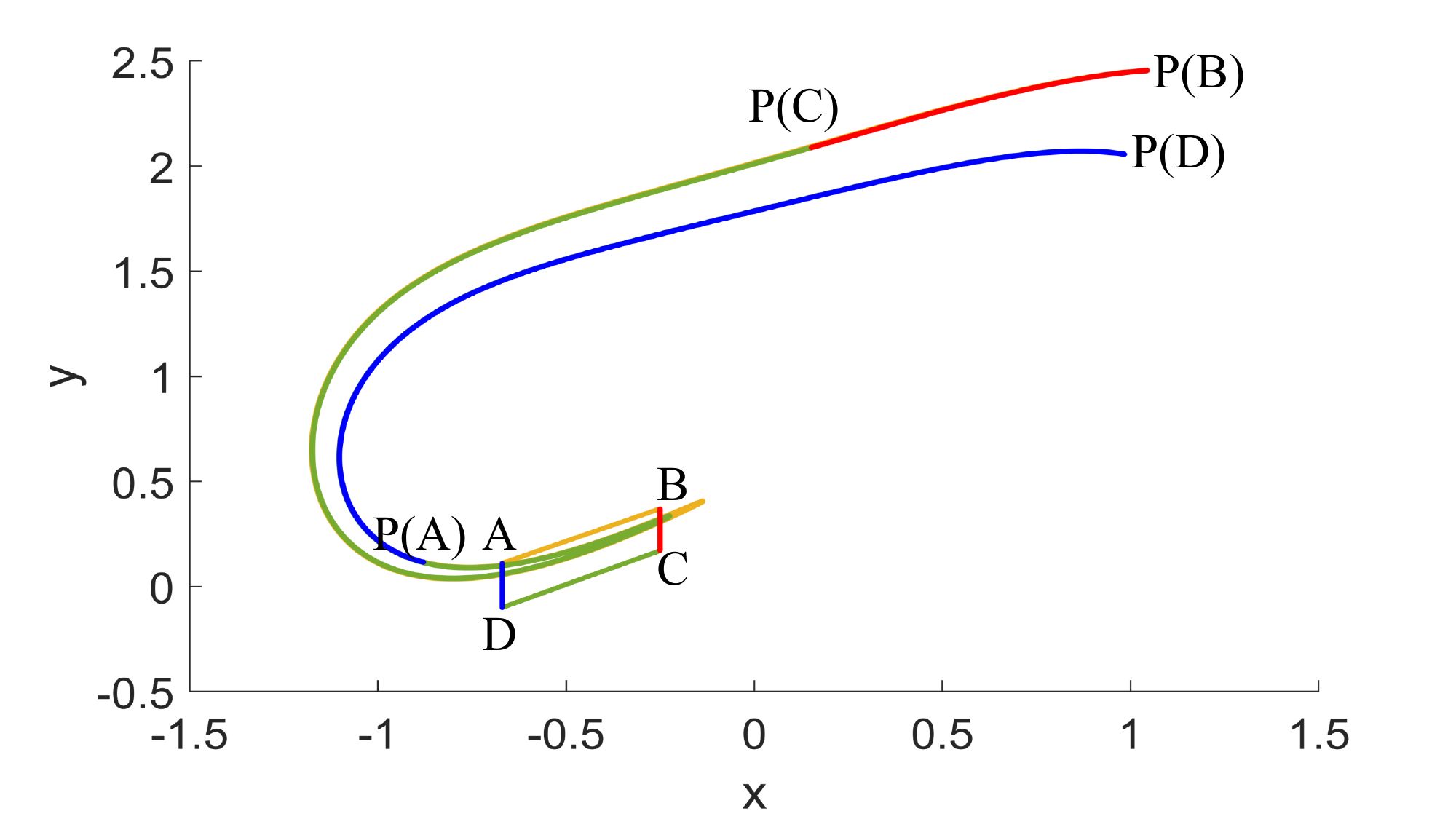}
	\caption{Smale horseshoe of the first return map when $\gamma=1.1$.}
	\label{fig2:g}
\end{figure}
\begin{figure}[H]
	\centering
	\includegraphics[width=\textwidth]{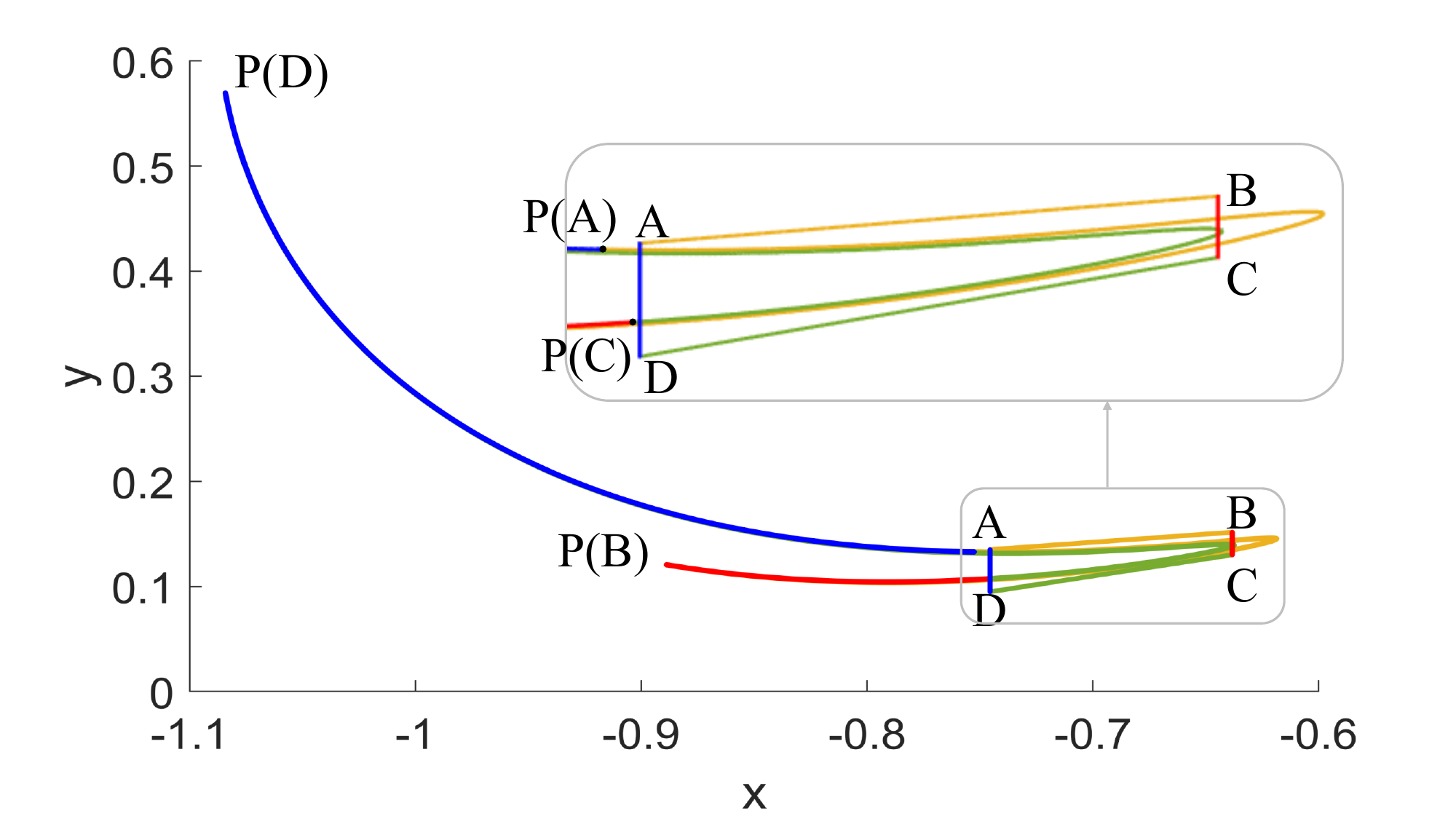}
	\caption{Smale horseshoe of the first return map when $\gamma=1.183$.}
	\label{fig2:h}
\end{figure}
It is found that for \( \gamma \in [0.4, 1.183] \), the Smale horseshoe persists. The coordinates of the four endpoints of the quadrilateral and the corresponding force amplitude are summarized in Table \ref{tab:table1}.
\begin{table}[h]
	\centering
	\caption{Summary of experimental results on the first return map.}
	\label{tab:table1}
	\begin{tabular}{lccccc}
		\toprule
		ID & $\gamma$   & $A$  & $B$  & $C$  & $D$  \\
		\midrule
		1 & $0.40$  & $(-0.23, 0.05)$ & $(0.13, 0.40)$ & $(0.13, 0.20)$ & $(-0.23, -0.15)$\\
		2 & $0.50$ & $(-0.50, -0.106~6)$ & $(0.30, 0.60)$ & $(0.30, 0.20)$ & $(-0.50, -0.406~6)$\\
		3 & $0.60$ & $(-0.50, 0.00)$ & $(0.30, 0.70)$ & $(0.30, 0.20)$ & $(-0.50, -0.406~6)$\\
		4 & $0.70$ & $(-0.50, 0.00)$ & $(0.30, 0.70)$ & $(0.30, 0.20)$ & $(-0.50, -0.406~6)$\\
		5 & $0.80$ & $(-0.50, 0.10)$ & $(0.30, 0.70)$ & $(0.30, 0.40)$ & $(-0.50, -0.20)$\\
		6 & $0.90$ & $(-0.50, 0.07)$ & $(0.30, 0.75)$ & $(0.30, 0.50)$ & $(-0.50, -0.10)$\\
		7 & $1.00$ & $(-0.57, 0.084)$ & $(0.30, 0.75)$ & $(0.30, 0.50)$ & $(-0.57, -0.10)$\\				
		8 & $1.10$ & $(-0.67, 0.11)$ & $(-0.25, 0.37)$ & $(-0.25, 0.17)$ & $(-0.67, -0.10)$\\				
		9 & $1.180$ & $(-0.77, 0.134)$ & $(-0.626~6, 0.151~6)$ & $(-0.626~6, 0.13)$ & $(-0.77, 0.08)$\\				
		10 & 1.183 & $(-0.745~6, 0.135)$ & $(-0.638~3, 0.151~6)$ & $(-0.638~3, 0.13)$ & $(-0.745~6, 0.095)$\\
		\bottomrule
	\end{tabular}
\end{table}

When $\gamma$ keeps increasing, we find that the the chaotic behavior weakens to some extent. To be precise, the lower bound of the topological entropy decreases when $\gamma$ increases from $1.183$ to $1.184$.  We use $P_{\gamma=a}$ to refer to the first return map when the force amplitude is set to be $a$ for the remainder of this paper.
When $\gamma=1.183$, Figure~\ref{fig2:h}  shows that the Smale horseshoe still exists, implying that the Poincar\'e map $P_{\gamma=1.183}$ is semi-conjugate to a 2-shift map. It follows that
\[h(P_{\gamma=1.183})\geq \log 2.\]
\begin{figure}[H]
	\centering
	\includegraphics[width=\textwidth]{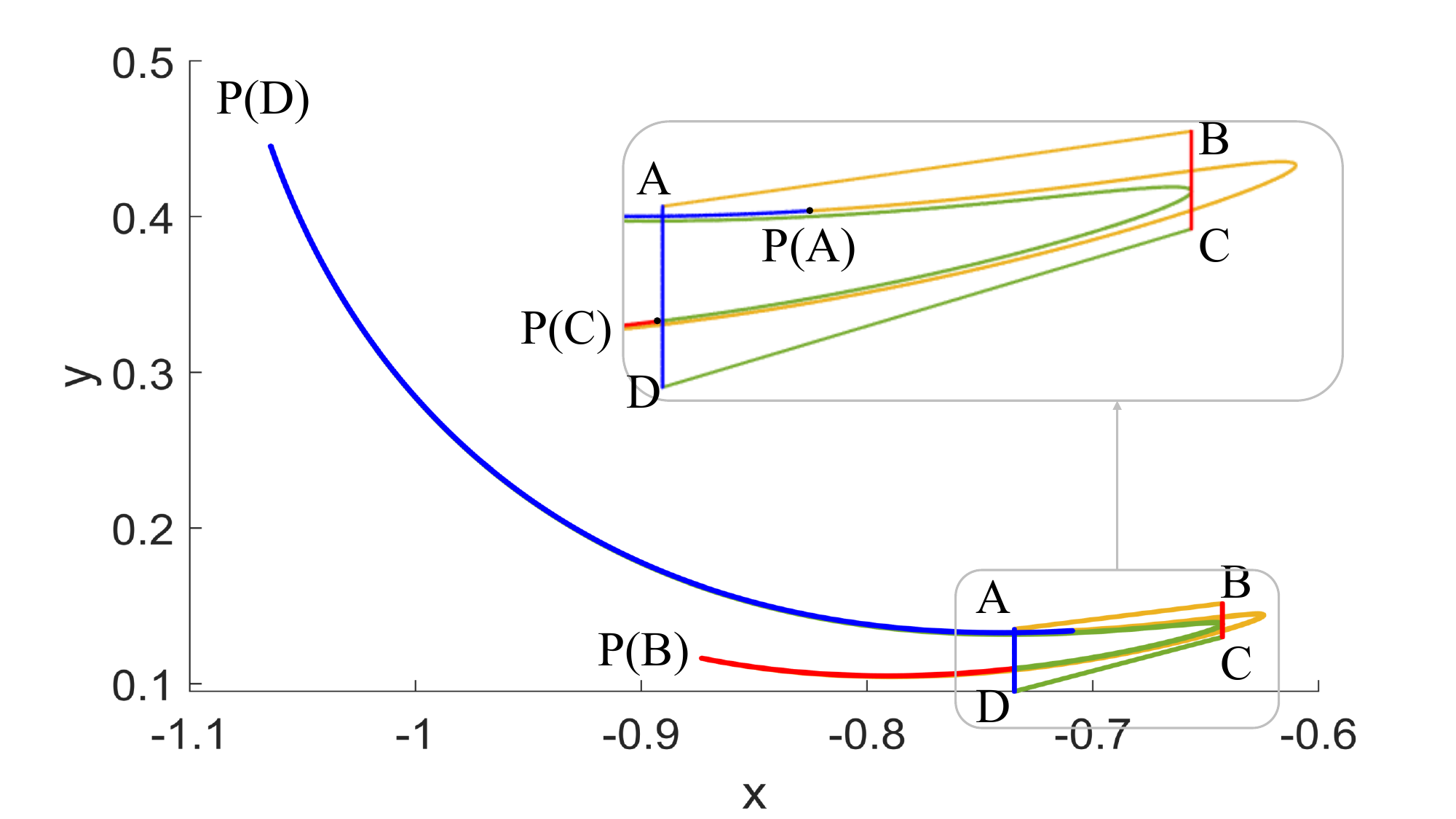}
	\caption{When $\gamma=1.184$, the Smale horseshoe of the first return map no longer exists.}
	\label{fig3}
\end{figure}
When $\gamma=1.184$, the Smale horseshoe no longer exists (Figure~\ref{fig3}), and it is difficult to find the topological horseshoe directly. Instead of searching for the ``complete'' horseshoe, we present an ``incomplete'' horseshoe in  Figure~\ref{fig4:a} with endpoints given by:
\begin{align*}
	Q_1: &A_1:(-0.734~8,0.135),~B_1:(-0.717~363~726,0.138~142~694),\\
	&C_1:(-0.669~605~101,0.119~775~478),~ D_1:(-0.734~8,0.095);\\
	Q_2: &A_2:(-0.688~220~372,0.143~395~46),~B_2:(-0.642~7,0.151~6),\\ &C_2:(-0.642~7,0.13),~D_2:(-0.668~665~587,0.120~132~513).
\end{align*}
To enhance comprehension, we present the geometric configuration of the horseshoe in Figure~\ref{fig4:b}, which shows that this is not the strict crossing structure defined in Section \ref{THT}. 
Nevertheless, in view of the Smale--Birkhoff theorem, iterating the return map can help us overcome this difficulty.
\begin{figure}[H]
	\centering
	\includegraphics[width=\textwidth]{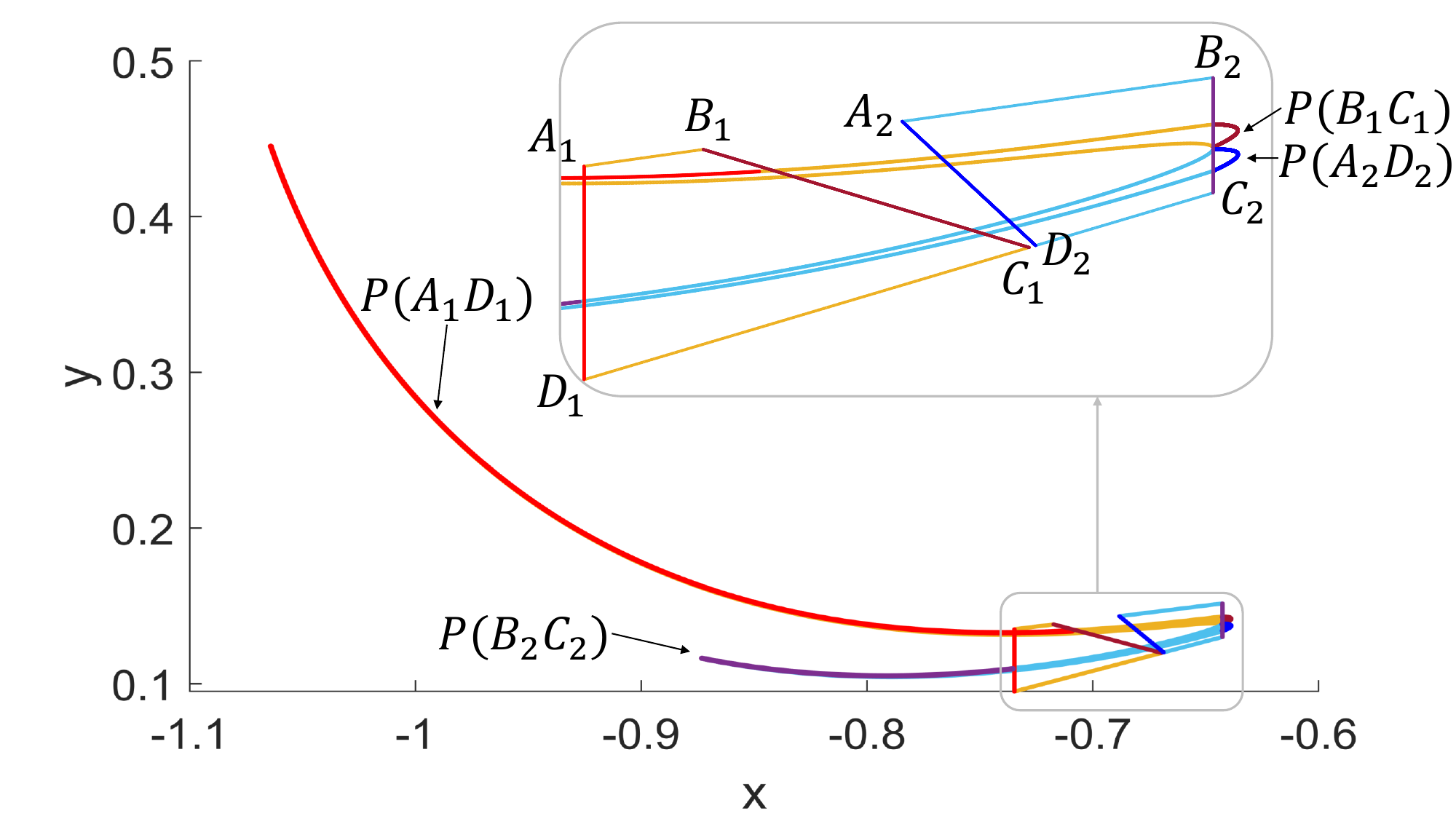}
	\caption{Pseudo-horseshoe of the first return map when $\gamma$=1.184.}
	\label{fig4:a}
\end{figure}
\begin{figure}[H]
	\centering
	\includegraphics[width=\textwidth]{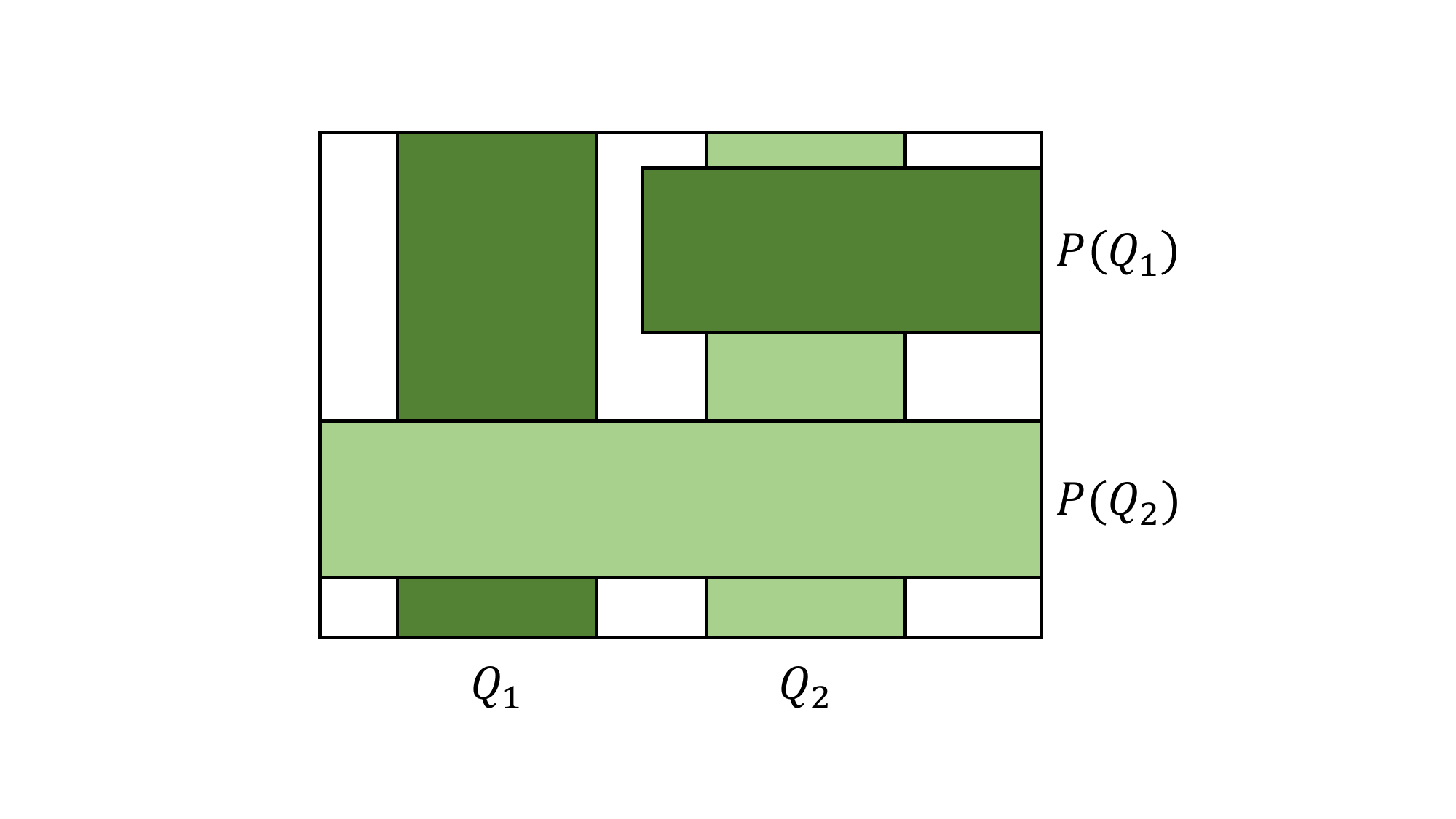}
	\caption{Schematic diagram of the pseudo-horseshoe in Figure~\ref{fig4:a}.}
	\label{fig4:b}
\end{figure}
To distinguish it from the topological horseshoe defined in Definition~\ref{thd}, we refer to the structure in Figure~\ref{fig4:b} as a pseudo-horseshoe. 
\begin{definition}
	Suppose $f: D \to \mathbb{R}^n$ satisfies the following assumptions:\\
	(1) There exist   $m$ mutually path-connected disjoint compact subsets $B_1,B_2,\dots$ and $B_m$ of $D$, and the restriction of $f$ to each $B_i$, denoted $f|_{B_i}$, is continuous;\\
	(2) For all $i\in\{1,2,\dots,m\}$, there exists $j\in\{1,2,\dots,m\}$, such that $f(B_i)\mapsto B_j$;\\
	(3) There exists a pair $(i,j) \in \{1,2,\dots,m\} \times \{1,2,\dots,m\}$ such that 
	$f(B_i) \cap B_j = \emptyset$;\\then we say $f$ has a pseudo-horseshoe, and such crossing structure is referred to be incomplete crossing.
\end{definition}
It is noteworthy that the existence of a pseudo-horseshoe does not necessarily imply the existence of chaos. More details can be found in Ref.~\cite{Cheng2025a}.
Fortunately, for the crossing structure in Fig~\ref{fig4:b}, we have the following fact.
\begin{theorem}\label{SHT}
	Let $X$ be a compact metric space, and $f: X \to X$ be a homeomorphism. Suppose $B_1$ and $B_2$ are disjoint, connected, nonempty, compact subsets of $X$, and satisfy the same crossing relationship shown in Figure~\ref{fig4:b}:
	\[ f(B_1) \mapsto B_2, \]
	\[ f(B_2) \mapsto B_j, \quad (j = 1, 2). \]
	Then, we have 
	\begin{eqnarray}
		h(f) \geq \log \frac{1 + \sqrt{5}}{2}.
	\end{eqnarray} 
\end{theorem}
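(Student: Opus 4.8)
The plan is to read the two crossing relations as the edges of a directed graph on two vertices and to recognise $\frac{1+\sqrt5}{2}$ as the spectral radius of its transition matrix. I would encode the hypotheses by
\begin{eqnarray*}
	A=\begin{pmatrix}0&1\\1&1\end{pmatrix},
\end{eqnarray*}
setting $A_{ij}=1$ exactly when $f(B_i)\mapsto B_j$, so that the only forbidden transition is $B_1\to B_1$. Let $\Sigma_A\subset\{1,2\}^{\mathbb Z}$ be the associated two-sided subshift of finite type, namely the sequences $(s_k)_{k\in\mathbb Z}$ with $A_{s_k s_{k+1}}=1$ for every $k$, carrying the shift $\sigma_A$. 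The proof then divides into realising $\Sigma_A$ as a topological factor of $f$ on an invariant set, and computing $h(\sigma_A)$.

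The technical core is a ``composability'' property of crossings: if $f(B_i)\mapsto B_k$ and $f(B_k)\mapsto B_j$, then $f^2(B_i)\mapsto B_j$. Indeed, for a connection $l$ of $B_i^1,B_i^2$ the first relation gives a subconnection $\bar l\subset l$ with $f(\bar l)$ a connection of $B_k$, the second gives a subconnection of $f(\bar l)$ mapping to a connection of $B_j$, and pulling this back through the homeomorphism $f|_{\bar l}$---whose inverse is continuous, hence carries connected sets to connected sets---yields the required subconnection of $l$. Iterating, every finite admissible word of $\Sigma_A$ is realised by a genuine chain of crossings. I would then upgrade Theorem~\ref{thm1} from the full shift to $\Sigma_A$: imitating its proof, for each $s\in\Sigma_A$ the set of points with itinerary $s$ is obtained as a nested intersection of continua built by applying the crossing relations forward and, since $f$ is invertible, backward, so that compactness and connectedness make it nonempty. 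The union $\Lambda$ of these orbits is compact and $f$-invariant, and the itinerary assignment $h:\Lambda\to\Sigma_A$ is a continuous surjection satisfying $h\circ f=\sigma_A\circ h$.

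Once the semi-conjugacy is in hand, I would invoke the standard entropy formula for subshifts of finite type, $h(\sigma_A)=\log\rho(A)$, with $\rho(A)$ the Perron eigenvalue of $A$; the characteristic polynomial $\lambda^2-\lambda-1$ gives $\rho(A)=\frac{1+\sqrt5}{2}$. Since topological entropy does not increase under a factor map and the restriction to an invariant subset cannot exceed $h(f)$, I conclude
\begin{eqnarray*}
	h(f)\ \ge\ h(f|_\Lambda)\ \ge\ h(\sigma_A)\ =\ \log\frac{1+\sqrt5}{2}.
\end{eqnarray*}

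The main obstacle I anticipate is the upgrade of Theorem~\ref{thm1}, which is stated only for the full shift: the single forbidden transition $A_{11}=0$ must be threaded carefully through the continuum construction so that precisely the itineraries of $\Sigma_A$ are realised while each admissible itinerary set stays nonempty, and this is exactly the place where invertibility of $f$, hence the bi-infinite indexing, is used. An alternative that avoids proving a new horseshoe lemma is to pass to a power of $f$: admissible words of length $N$ that begin and end with the symbol $2$ concatenate freely and so furnish a full shift on roughly $\left(\tfrac{1+\sqrt5}{2}\right)^{N}$ symbols for $f^{N-1}$; combined with $h(f^{k})=k\,h(f)$ from Theorem~\ref{TE}, letting $N\to\infty$ recovers the same bound $\log\frac{1+\sqrt5}{2}$.
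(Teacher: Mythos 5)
Your proposal is correct, but your primary route is genuinely different from the paper's. You encode the crossing data as a subshift of finite type $\Sigma_A$ whose transition matrix $A$ has $A_{11}=0$ and all other entries equal to $1$, construct a semi-conjugacy from an $f$-invariant set onto $(\Sigma_A,\sigma_A)$, and read off the bound as $\log\rho(A)=\log\frac{1+\sqrt{5}}{2}$. The paper never leaves the full-shift setting of Theorem~\ref{thm1}: setting $B_{ij}=f(B_i)\cap B_j$, it shows that $f^{-1}(B_{12})$ and $f^{-1}(B_{22})$ are two mutually crossing blocks for $f^2$, and then inductively counts how many mutually crossing blocks exist for $f^n$; the count obeys the Fibonacci recursion $x_{n+1}=x_n+y_n$, $y_{n+1}=x_n$, so Theorem~\ref{thm1} together with $h(f^n)=n\,h(f)$ (Theorem~\ref{TE}) gives $h(f)\ge\frac{1}{n}\log x_{n-1}$, and letting $n\to\infty$ yields the golden-ratio bound. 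In other words, the paper's proof is essentially the ``alternative'' you sketch in your last paragraph (concatenating admissible words through the symbol $2$ and applying the full-shift theorem to a power of $f$), made concrete via the general-term formula for Fibonacci numbers. What your main route buys is a one-step, non-asymptotic argument that identifies $\frac{1+\sqrt{5}}{2}$ conceptually as the Perron eigenvalue of $A$; what it costs is exactly the obstacle you flag: Theorem~\ref{thm1} as stated covers only full shifts, so you must prove an SFT version of the horseshoe lemma (your nested-continua construction, using invertibility of $f$ for the backward itineraries, is standard and can be completed, but it is an extra lemma the paper deliberately avoids). Note also that both arguments rest on the same composability of crossings ($f(B_i)\mapsto B_k$ and $f(B_k)\mapsto B_j$ imply $f^2(B_i)\mapsto B_j$), which the paper asserts as obvious for its preimage blocks and you actually justify via continuity of $\left(f|_{\bar l}\right)^{-1}$ --- a point in your favor, since the theorem's hypothesis that $f$ is a homeomorphism is used precisely there.
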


\begin{proof}
	Set $B_{ij}=f(B_i)\cap B_j$, if $f(B_i)\cap B_j\neq \emptyset$. Then, we have 
	\[f(B_{12})\mapsto B_i, ~(i=1,2),\]	
	\[f(B_{22})\mapsto B_i, ~(i=1,2).\]	
	Recalling that $B_{12}\subset f(B_1)$, we deduce that
	\[f^2(f^{-1}(B_{12}))\mapsto B_i, (i=1,2).\]
	Similarly, we have
	\[f^2(f^{-1}(B_{22}))\mapsto B_i, (i=1,2).\]
	Furthermore, it is obvious that
	\[f^2(f^{-1}(B_{12}))\mapsto f^{-1}(B_{12}), \]
	\[f^2(f^{-1}(B_{12}))\mapsto f^{-1}(B_{22}),\]
	\[f^2(f^{-1}(B_{22}))\mapsto f^{-1}(B_{12}),\]
	\[f^2(f^{-1}(B_{22}))\mapsto f^{-1}(B_{22}).\]
	
	\begin{figure}[H]
		%\hspace{-2cm}
		\centering
		\includegraphics[height=8cm]{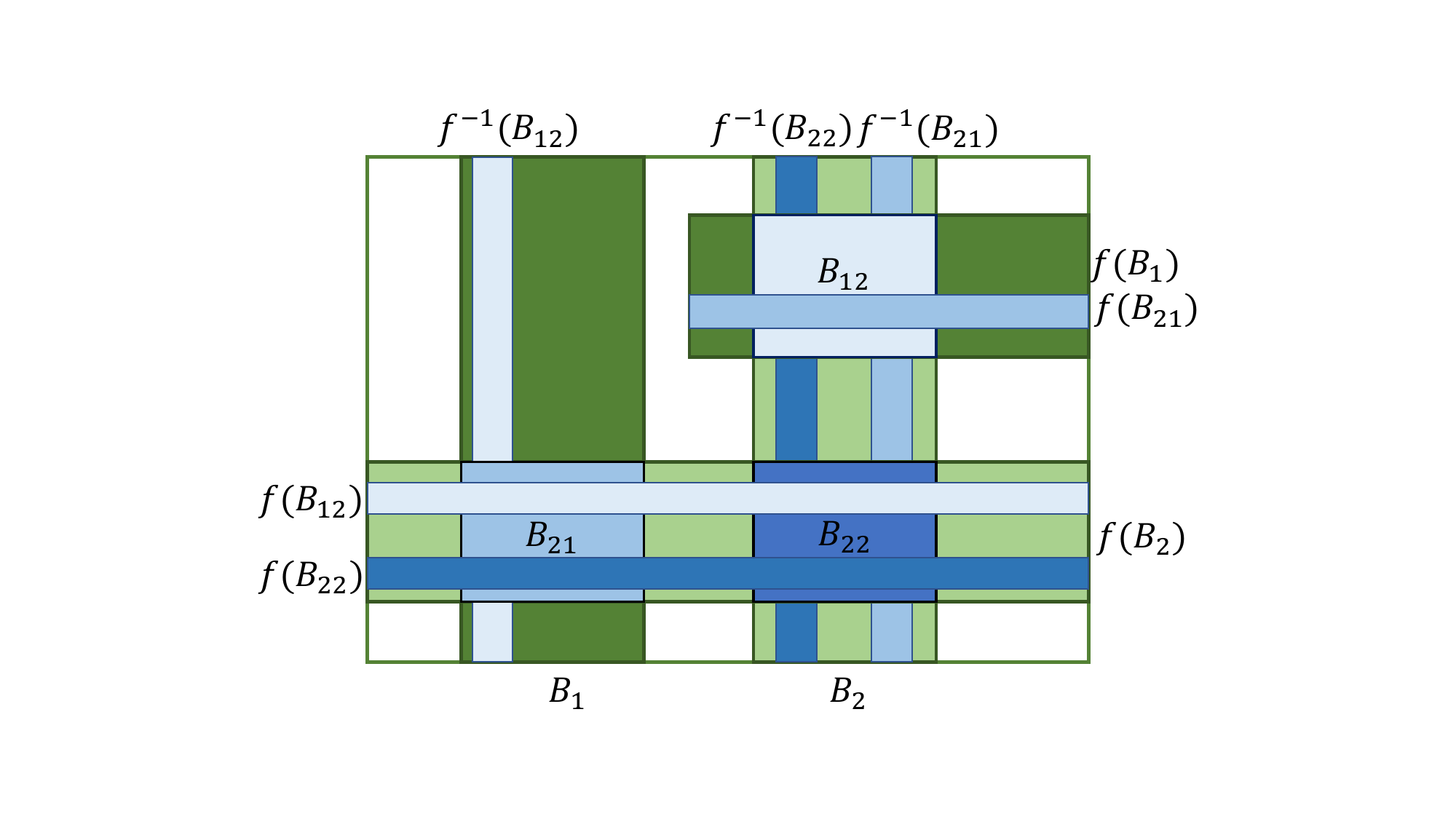}
		\caption{There are 2 crossing blocks with respect to $f^2$.}
		\label{ppt-2}
	\end{figure}

	It follows that $f^{-1}(B_{12})$ and $f^{-1}(B_{22})$ are 2 crossing blocks with respect to $f^2$, as illustrated  in Figure~\ref{ppt-2}. From Theorem \ref{TE}, there exists an invariant set $\Lambda\in X$, such that $f^2|_\Lambda$ is semi-conjugate to the 2-shift $\sigma$ and  $h(f^2)\geq \log 2$, where $h(f^2)$ is the topological entropy of $f^2$. For \( f \), Theorem \ref{TE} implies that \( h(f) = \frac{1}{2} h(f^2) \geq \frac{1}{2} \log 2 \).
	
	By simple induction, if we find $x_n$ blocks, whose images under \( f^n \) cross \( B_2 \), and \( y_n \) of their images cross \( B_1 \) when considering \( f^n \),
	then we will find \( x_n + y_n \) blocks whose images under \( f^{n+1} \) cross \( B_2 \), and \( x_n \) of their images crossing \( B_1 \) when considering \( f^{n+1} \).  This implies the existence of \( x_n \) crossing blocks with respect to \( f^{n+1} \). In other words, we obtain the following sequence defined by the recursive formulas:
	
	\begin{eqnarray}
		\left\{\begin{array}{l}
			x_{n+1}=x_n+y_n,\\
			y_{n+1}=x_n,\\
			x_1=2,\\
			y_1=1.
		\end{array}\right.
	\end{eqnarray}	
	Focusing on $\{x_n\}$, we get a Fibonacci sequence:
	\begin{eqnarray}
		\left\{\begin{array}{l}
			x_{n+2}=x_n+x_{n+1},\\
			x_1=2,\\
			x_2=3.
		\end{array}\right.
	\end{eqnarray}	
	Using the general term formula of the Fibonacci sequence, we have 
	\begin{eqnarray}
		x_n=\frac{1}{\sqrt{5}} [(\frac{1+\sqrt{5}}{2})^{n+2}-(\frac{1-\sqrt{5}}{2})^{n+2}].
	\end{eqnarray}
	From the discussion above, we find that there exist $x_{n-1}$ crossing blocks with respect to $f^n$, thus
	\begin{eqnarray}
		h(f)\geq\frac{1}{n}\log\{\frac{1}{\sqrt{5}} [(\frac{1+\sqrt{5}}{2})^{n+1}-(\frac{1-\sqrt{5}}{2})^{n+1}]\}.
	\end{eqnarray}
	
	Letting $n\to\infty$, we obtain
	\begin{eqnarray}
		h(f)\geq\log(\frac{1+\sqrt{5}}{2})\approx \log(1.61803)>\log(\sqrt{2})=\frac12\log2.	
	\end{eqnarray}
	
\end{proof}

Following the theorem above, we have the corollary below.

\begin{Cor}
	\begin{eqnarray}
		h(P_{\gamma=1.184})\geq\log(\frac{1+\sqrt{5}}{2}),
	\end{eqnarray}
	where $h(P_{\gamma=1.184})$ represents the topological entropy of the first return map $P_{\gamma=1.184}$ of the perturbed Duffing system when $\gamma=1.184$.
\end{Cor}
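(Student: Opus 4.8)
The plan is to recognize that the crossing data in Figure~\ref{fig4:b} encodes a subshift of finite type on two symbols, and then to realize its entropy through iterates of $f$. The admissible one-step transitions are $1\to 2$, $2\to 1$ and $2\to 2$, while $1\to 1$ is forbidden (this is precisely the ``incomplete'' crossing); the associated transition matrix is $A=\left(\begin{smallmatrix}0&1\\1&1\end{smallmatrix}\right)$, whose spectral radius is the golden ratio $\varphi=\frac{1+\sqrt5}{2}$, the positive root of $\lambda^2-\lambda-1=0$. Since $\log\varphi$ is exactly the claimed lower bound, the strategy is to exhibit, for every $n$, a family of mutually crossing sub-blocks for the iterate $f^{n}$ whose cardinality grows like $\varphi^{\,n}$, apply Theorem~\ref{TE} to get $h(f)=\frac1n h(f^{n})\ge \frac1n\log(\#\text{blocks})$, and let $n\to\infty$.

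To build these families I would exploit that $f$ is a homeomorphism. First I would define the second-level pieces $B_{ij}=f(B_i)\cap B_j$ for the admissible pairs and pull them back by $f^{-1}$. A short verification of the base case $n=2$ shows that $f^{-1}(B_{12})$ and $f^{-1}(B_{22})$ mutually cross under $f^{2}$: indeed $f^{2}(f^{-1}(B_{12}))=f(B_{12})$ and $f^{2}(f^{-1}(B_{22}))=f(B_{22})$ each inherit the crossings $f(B_2)\mapsto B_1$ and $f(B_2)\mapsto B_2$ because $B_{12},B_{22}\subset B_2$, so each of the two blocks crosses both, giving $h(f^{2})\ge\log 2$ via Theorem~\ref{thm1}. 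The point is that the crossing relation composes: a connection that crosses $B_i$ and whose image crosses $B_j$ yields, after intersecting with $B_j$ and pulling back by $f^{-1}$, a full-width sub-block that crosses under the higher iterate, and injectivity of $f$ keeps these pulled-back sub-blocks disjoint.

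Iterating this refinement, I would track two counts: let $x_n$ be the number of constructed blocks whose image under $f^{n}$ crosses $B_2$, and $y_n$ the number whose image crosses $B_1$. Reading the admissible transitions backwards, a block landing in $B_2$ must have come from a block crossing $B_1$ or $B_2$, while a block landing in $B_1$ can only have come from one crossing $B_2$; hence
\[
x_{n+1}=x_n+y_n,\qquad y_{n+1}=x_n,\qquad x_1=2,\quad y_1=1.
\]
Eliminating $y_n$ gives the Fibonacci recursion $x_{n+2}=x_{n+1}+x_n$, so $x_{n}=\tfrac{1}{\sqrt5}\bigl(\varphi^{\,n+2}-\psi^{\,n+2}\bigr)$ with $\psi=\frac{1-\sqrt5}{2}$. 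Since the refinement produces $x_{n-1}$ mutually crossing blocks for $f^{n}$, Theorem~\ref{TE} yields $h(f)\ge\frac1n\log x_{n-1}$, and letting $n\to\infty$ gives $h(f)\ge\log\varphi=\log\frac{1+\sqrt5}{2}$, since the exponential growth rate of $x_{n-1}$ is exactly $\varphi$.

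The hard part will not be the Fibonacci bookkeeping but the topological content of the refinement step: verifying rigorously that the crossing relation of Definition~\ref{thd} is preserved under composition and, in particular, that the pulled-back pieces $f^{-1}(B_{ij})$ still cross one another under $f^{n}$. This requires fixing consistent ``ends'' $B_i^1,B_i^2$ for every refined block so that ``$f^{n}(\cdot)$ crosses $B_1$'' genuinely forces crossing of each full-width sub-block of $B_1$ produced at that level---a Markov/full-width property that, together with injectivity of $f$ guaranteeing disjointness and well-defined preimages, is what makes the inductive count legitimate. Once this is in place, the limit computation is routine.
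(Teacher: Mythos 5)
Your proposal is correct and follows essentially the same route as the paper: the paper's Theorem~\ref{SHT} proves exactly this bound via the same construction (the sets $B_{ij}=f(B_i)\cap B_j$, their pullbacks under $f^{-1}$, the recursion $x_{n+1}=x_n+y_n$, $y_{n+1}=x_n$ with $x_1=2$, $y_1=1$, the Fibonacci closed form, and the limit $n\to\infty$), and the corollary is then just the application of that theorem to the pseudo-horseshoe of Figure~\ref{fig4:b}. Your opening observation that the crossing data is a subshift of finite type with transition matrix of spectral radius $\frac{1+\sqrt5}{2}$ is a cleaner conceptual framing than the paper offers, and your closing caveat about verifying that crossings compose under iteration flags a point the paper also leaves implicit, but the substance of the argument is the same.
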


The procedure presented above can be conducted on the second return map $P^2$ and the third return map $P^3$ to get further estimation of the topological entropy. For $\gamma\geq 1.184$, several experiments have been conducted to find the Smale horseshoe of $P^2$ (Table \ref{tab:table2}).

\begin{table}[h]
	\caption{Summary of experimental results on the second return map.}
	\centering
	\label{tab:table2}
	\begin{tabular}{lccccc}
			\toprule
			ID & $\gamma$   & $A$  & $B$  & $C$  & $D$  \\
			\midrule
			1 & 1.184 & $(-0.676, 0.138)$ & $(-0.653, 0.140~27)$ & $(-0.653, 0.130)$ & $(-0.676, 0.122~4)$\\
			
			2 & 1.185 & $(-0.678, 0.138)$ & $(-0.653, 0.140~27)$ & $(-0.653, 0.130)$ & $(-0.678, 0.122~4)$\\
			
			3 & 1.187 & $(-0.685, 0.138)$ & $(-0.653~42, 0.140~27)$ & $(-0.653~42, 0.130)$ & $(-0.685, 0.122~4)$\\
			
			4 & 1.190 & $(-0.690, 0.137~188)$ & $(-0.666~78, 0.137~638)$ & $(-0.666~78, 0.133~146)$ & $(-0.690, 0.124~18)$\\
			
			5 & 1.191 & $(-0.690, 0.137~188)$ & $(-0.671~2, 0.137~638)$ & $(-0.671~2, 0.133~146)$ & $(-0.690, 0.124~18)$\\
			
			6 & 1.191~5 & $(-0.690, 0.136~667)$ & $(-0.673~5, 0.137~031)$ & $(-0.673~5, 0.133~479)$ & $(-0.690, 0.126~263)$\\
			
			7 & 1.191~7 & $(-0.690, 0.136~448)$ & $(-0.673~9, 0.137~031)$ & $(-0.673~9, 0.133~479)$ & $(-0.690, 0.126~263)$\\
			
			8 & 1.191~8 & $(-0.690, 0.136~356)$ & $(-0.674~29, 0.137~031)$ & $(-0.674~29, 0.133~479)$ & $(-0.690, 0.126~263)$\\
			
			9 & 1.191~9 & $(-0.690~442, 0.136~356)$ & $(-0.674~711, 0.136~754)$ & $(-0.674~711, 0.133~507)$ & $(-0.690~442, 0.126~819)$\\
			
			10 & 1.192~0 & $(-0.690~442, 0.136~356)$ & $(-0.675~138, 0.136~754)$ & $(-0.675~138, 0.133~69)$ & $(-0.690~442, 0.126~7)$\\
			\bottomrule
		\end{tabular}
\end{table}

\begin{figure}[H]
			\centering
	\includegraphics[width=\textwidth]{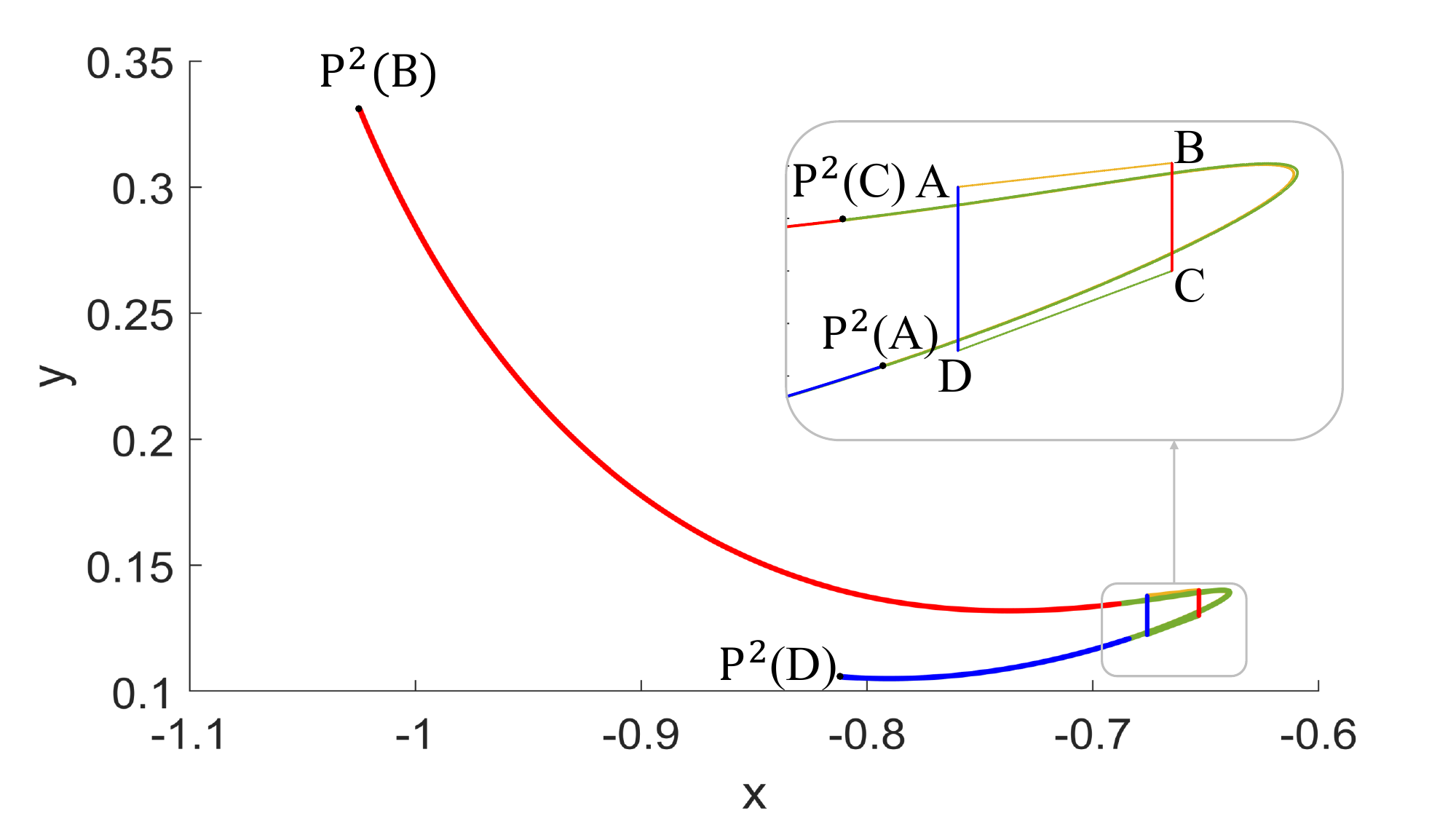}
	\caption{Smale horseshoe of the second return map when $\gamma=1.184$.}
	\label{fig6:a}
\end{figure}
\begin{figure}[H]
			\centering
	\includegraphics[width=\textwidth]{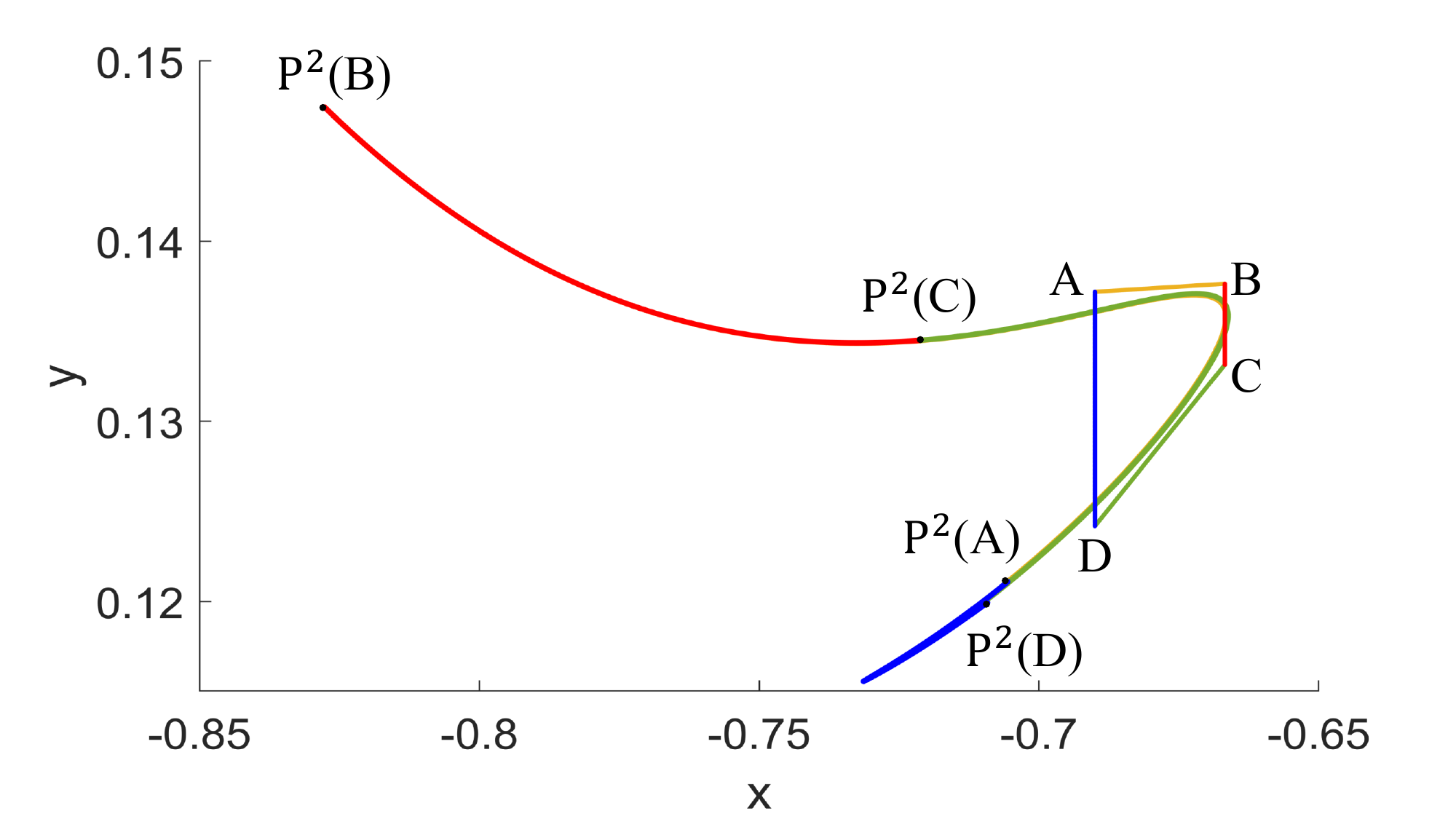}
	\caption{Smale horseshoe of the second return map when $\gamma=1.190$.}
	\label{fig6:b}
\end{figure}
\begin{figure}[H]
			\centering
	\includegraphics[width=\textwidth]{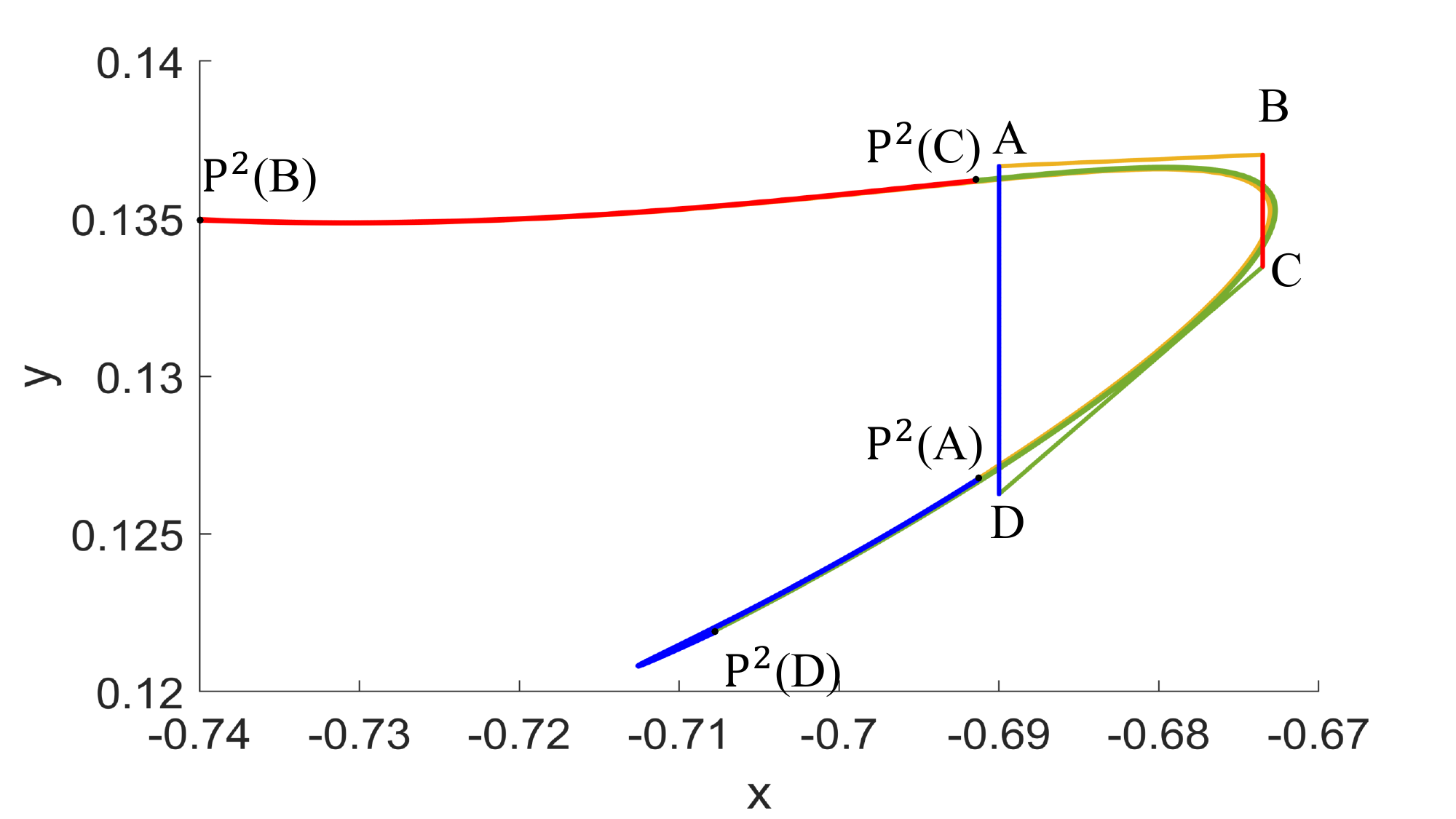}
	\caption{Smale horseshoe of the second return map when $\gamma=1.1915$.}
	\label{fig6:c}
\end{figure}
\begin{figure}[H]
			\centering
	\includegraphics[width=\textwidth]{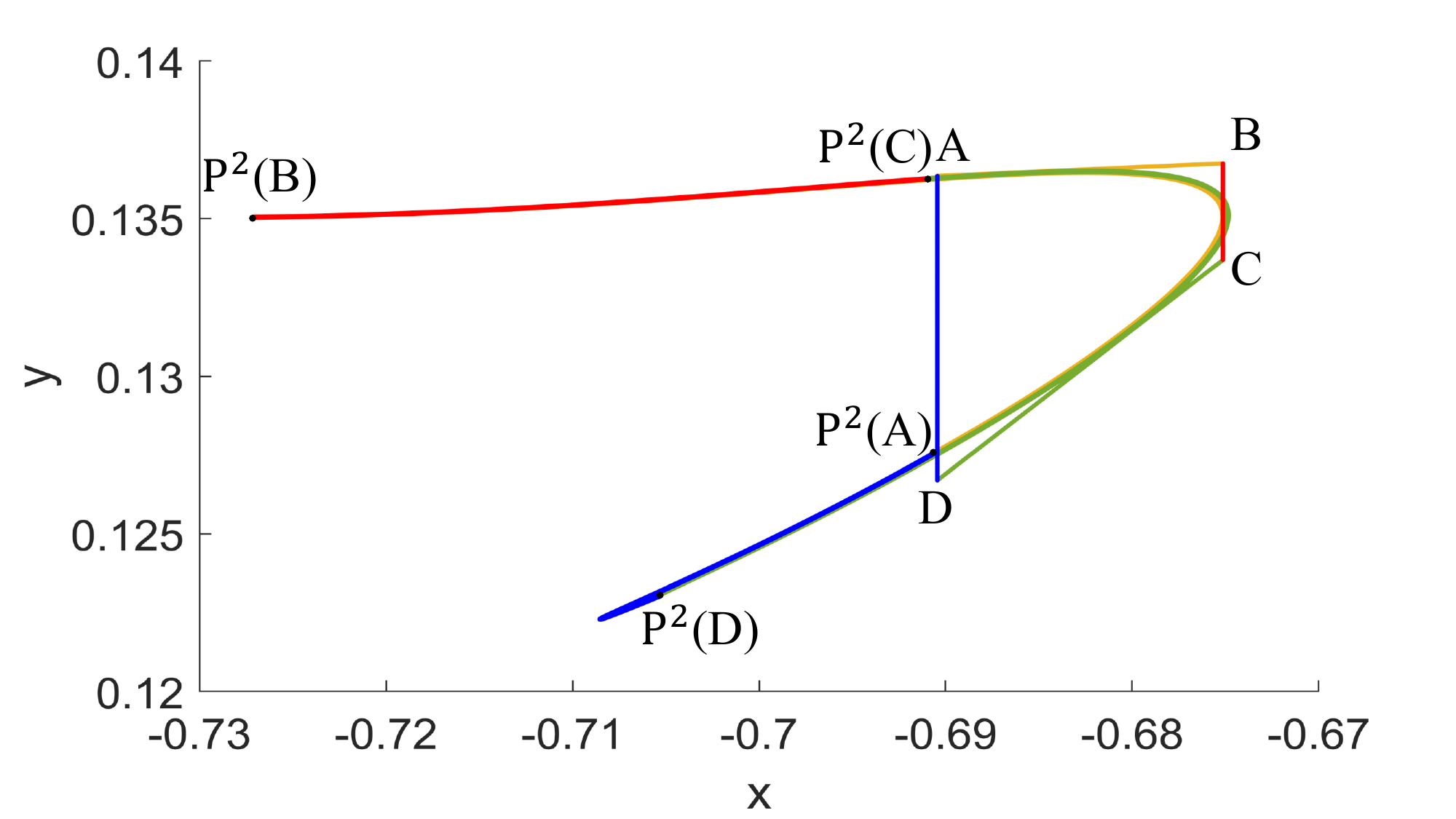}
	\caption{Smale horseshoe of the second return map when $\gamma=1.1920$.}
	\label{fig6:d}
\end{figure}

When $\gamma=1.1921$, the Smale horseshoe of $P^2_{\gamma=1.1921}$ disappears, and we find the incomplete crossing structure as in the case for $P_{\gamma=1.184}$ (Figure~\ref{fig7}).
\begin{figure}[H]
	\centering
	\includegraphics[width=\textwidth]{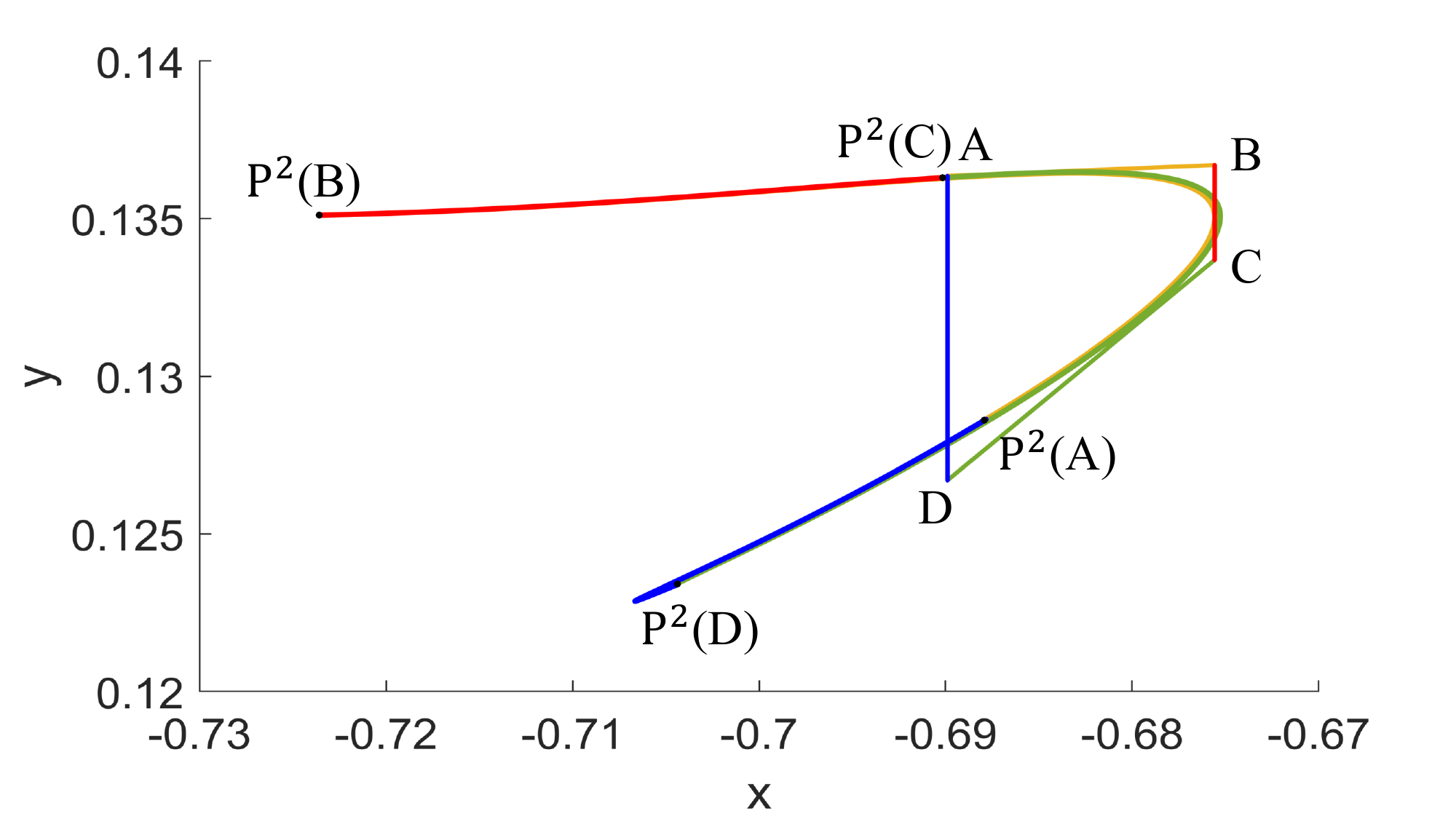}
	\caption{When $\gamma=1.1921$, the Smale horseshoe of the second return map disappears.}
	\label{fig7}
\end{figure}
Efforts have been made to find a pseudo-horseshoe (Figure~\ref{fig8:a}), whose endpoints are as follows.
\begin{align*}
	Q_1: &A_1:(-0.689~9,0.136~356),~B_1:(-0.684~3948~47,0.136~488~228),\\
	&C_1:(-0.679~477~483,0.131~786~817),~ D_1:(-0.689~9,0.126~7);\\
	Q_2: &A_2:(-0.683~993~429,0.136~498),~B_2:(-0.675~578,0.136~7),\\ 
	&C_2:(-0.675~578,0.133~69),~D_2:(-0.678~431,0.132~297~6).
\end{align*}

\begin{figure}[H]
			\centering
	\includegraphics[width=\textwidth]{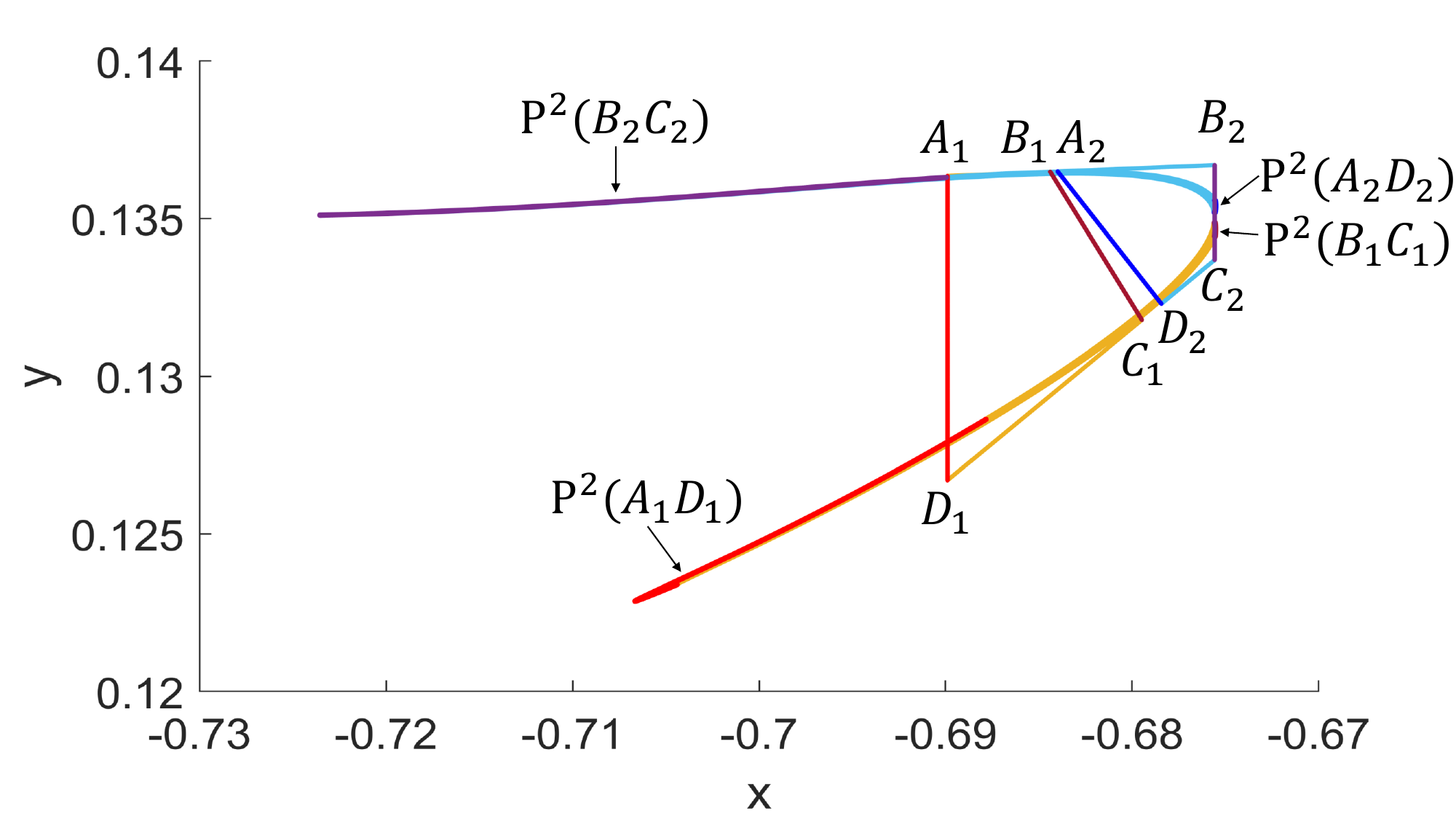}
	\caption{Pseudo-horseshoe of the second return map when $\gamma=1.1921$.}
	\label{fig8:a}
\end{figure}
\begin{figure}[H]
		\centering
\includegraphics[width=\textwidth]{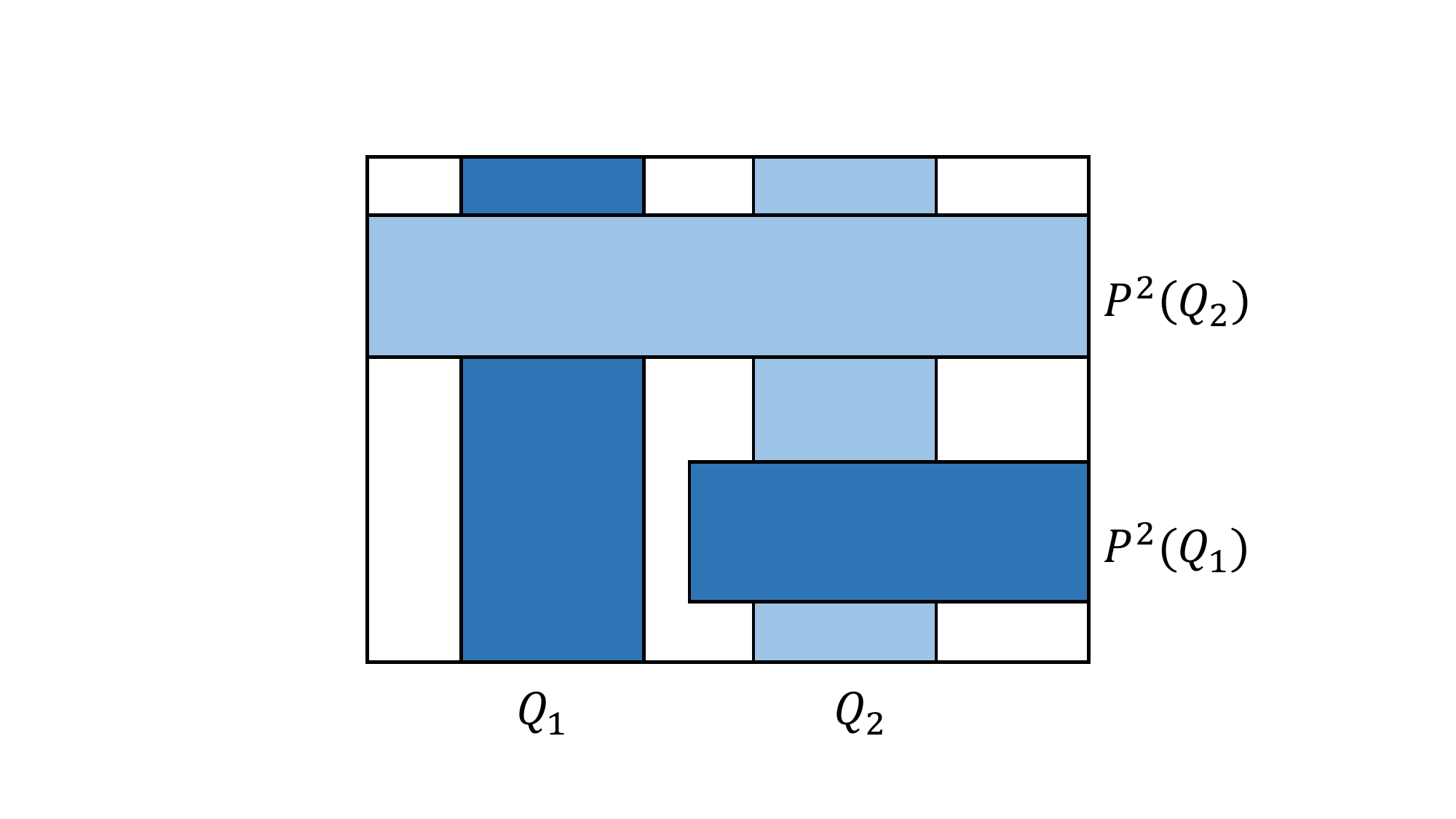}
\caption{Schematic diagram of the pseudo-horseshoe in Figure~\ref{fig8:a}.}
\label{fig8:b}
\end{figure}

The incomplete crossing structure of $P^2_{\gamma=1.1921}$   is shown in Figure~\ref{fig8:b}. Similar to the discussion on the structure in Figure~\ref{fig4:b}, we can conclude that 
\begin{eqnarray}
	h(P^2_{\gamma=1.1921})\geq \log \frac{1+\sqrt{5}}{2},
\end{eqnarray}
and thus	
\begin{eqnarray}
	h(P_{\gamma=1.1921})\geq\frac12\log\frac{1+\sqrt{5}}{2}.
\end{eqnarray}

To show that the procedure above can be conducted repeatably, we consider the third return map when $\gamma\geq1.1921$, and obtain the results shown in Table \ref{tab:table3}.

\begin{table}[h]
	\caption{Summary of experimental results on the third return map.}\label{tab:table3}
		\begin{tabular}{lccccc}
			\toprule
			ID & $\gamma$   & $A$  & $B$  & $C$  & $D$  \\
			\midrule
			1 & $1.192~1$ & $(-0.700, 0.136~4)$ & $(-0.680, 0.136~65)$ & $(-0.676, 0.132~4)$ & $(-0.689~8, 0.127~6)$\\
			
			2 & $1.192~3$ & $(-0.700, 0.136~4)$ & $(-0.680, 0.136~65)$ & $(-0.676, 0.132~4)$ & $(-0.690, 0.127~794)$\\
			
			3 & $1.192~5$ & $(-0.698, 0.136~4)$ & $(-0.680, 0.136~65)$ & $(-0.676~8, 0.133)$ & $(-0.690~224, 0.128~035)$\\
			
			4 & $1.192~7$ & $(-0.696, 0.136~4)$ & $(-0.680, 0.136~65)$ & $(-0.677~124, 0.133~4)$ & $(-0.691, 0.128)$\\
			
			5 & $1.192~8$ & $(-0.695~525, 0.136~2)$ & $(-0.680, 0.136~65)$ & $(-0.678~3, 0.132~6)$ & $(-0.693, 0.127~6)$\\
			\bottomrule
		\end{tabular}
\end{table}

Several experiments have been done to detect the Smale horseshoes of the third return map when $\gamma$ ranges in 1.1921 and 1.1928. These Smale horseshoes are shown in Figure~\ref{fig9:a}--Figure~\ref{fig9:c}.

\begin{figure}[H]
	\centering
	\includegraphics[width=\textwidth]{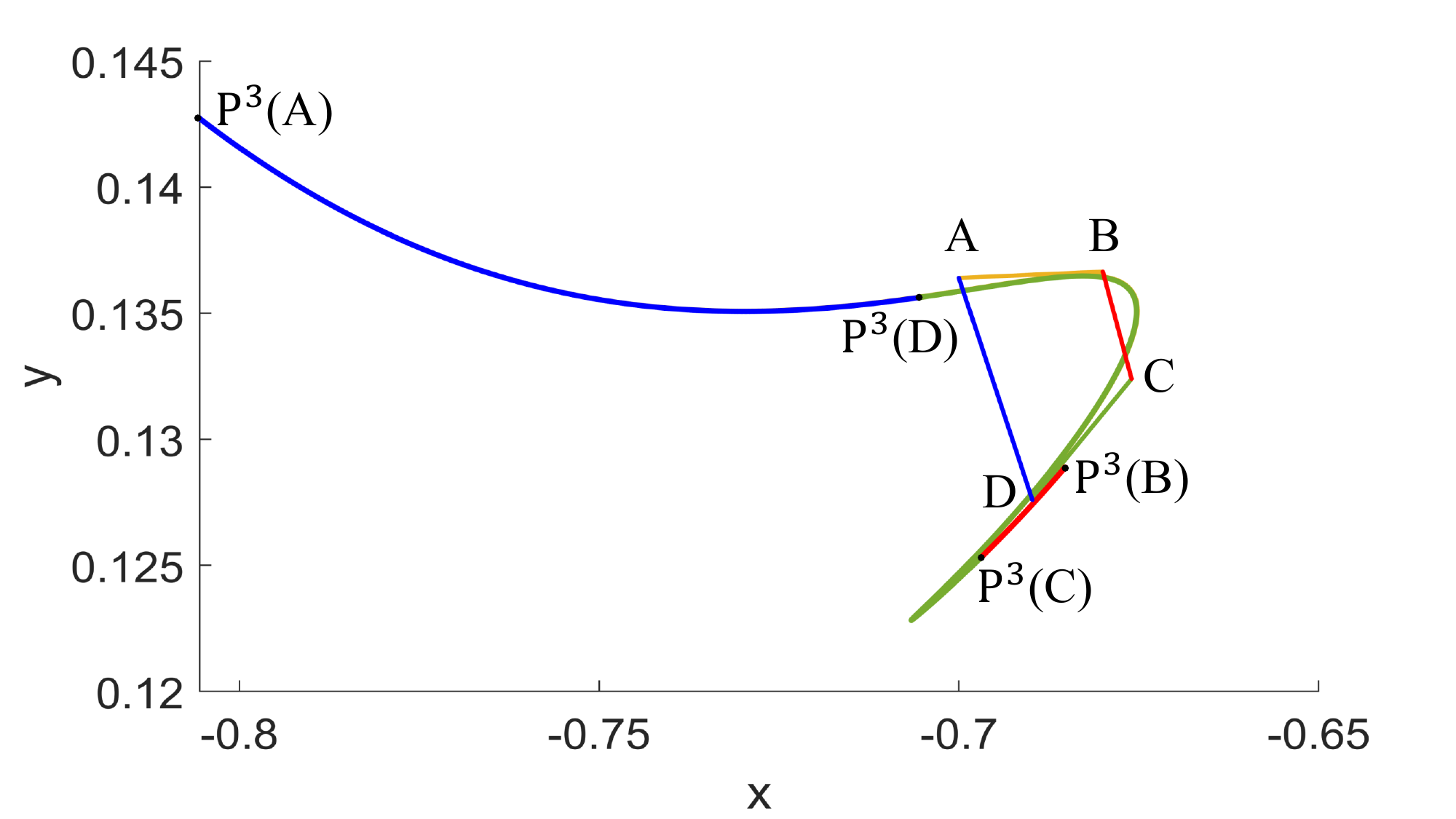}
	\caption{Smale horseshoe of the third return map when $\gamma=1.1921$.}
	\label{fig9:a}
\end{figure}
\begin{figure}[H]
	\centering
	\includegraphics[width=\textwidth]{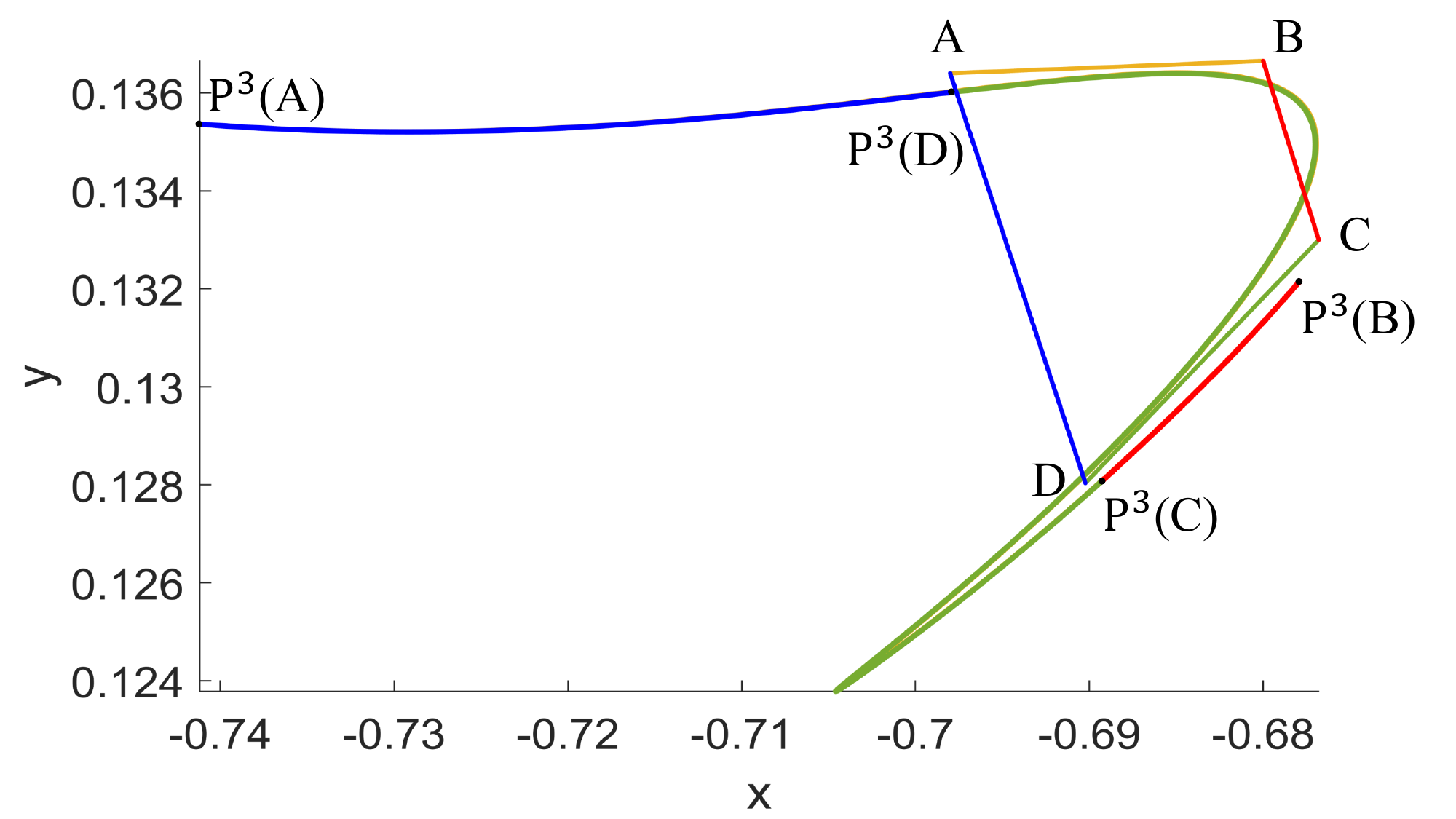}
	\caption{Smale horseshoe of the third return map when $\gamma=1.1925$.}
	\label{fig9:b}
\end{figure}
\begin{figure}[H]
	\centering
	\includegraphics[width=\textwidth]{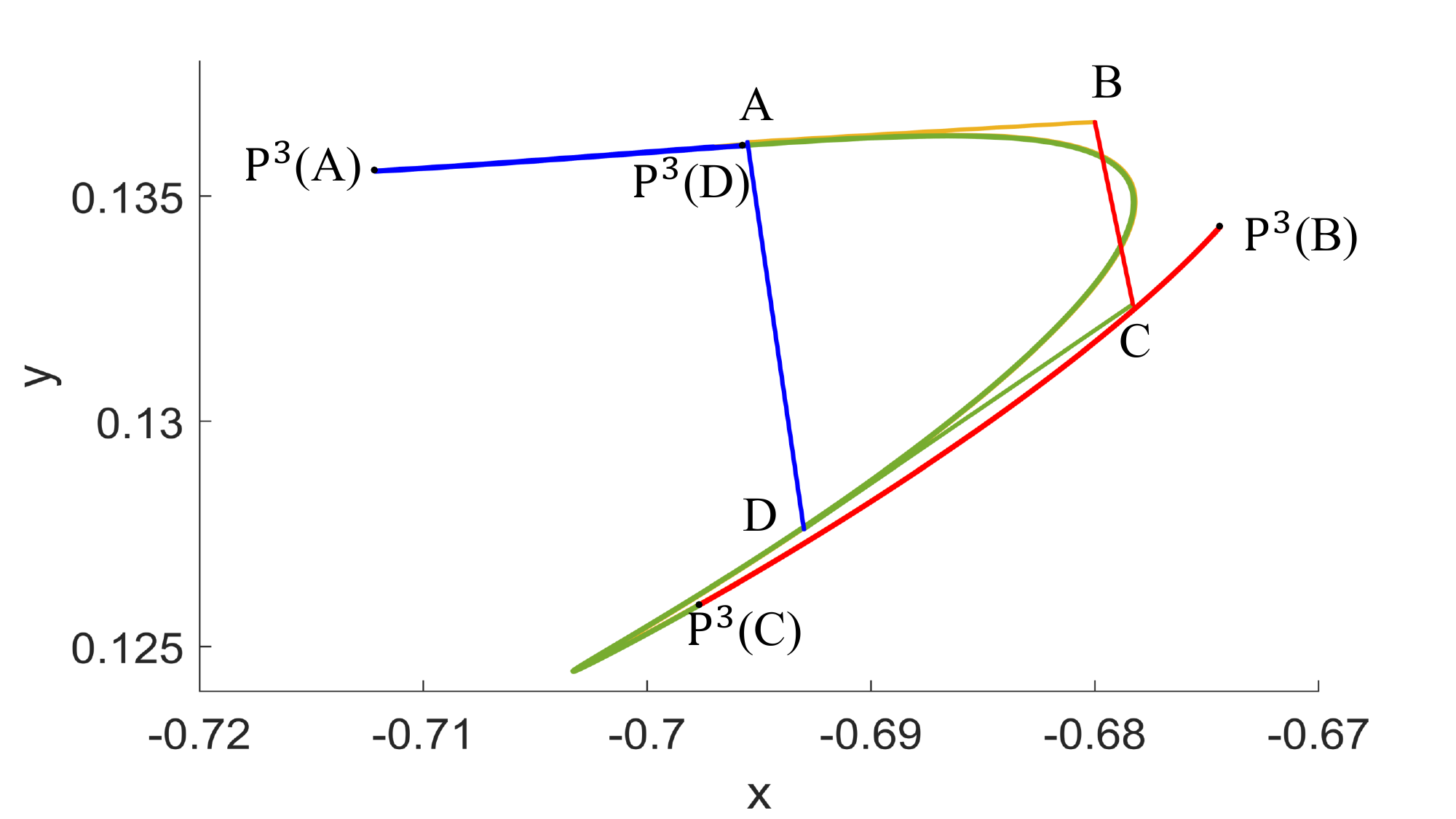}
	\caption{Smale horseshoe of the third return map when $\gamma=1.1928$.}
	\label{fig9:c}
\end{figure}
\begin{figure}[H]
	\centering
	\includegraphics[width=\textwidth]{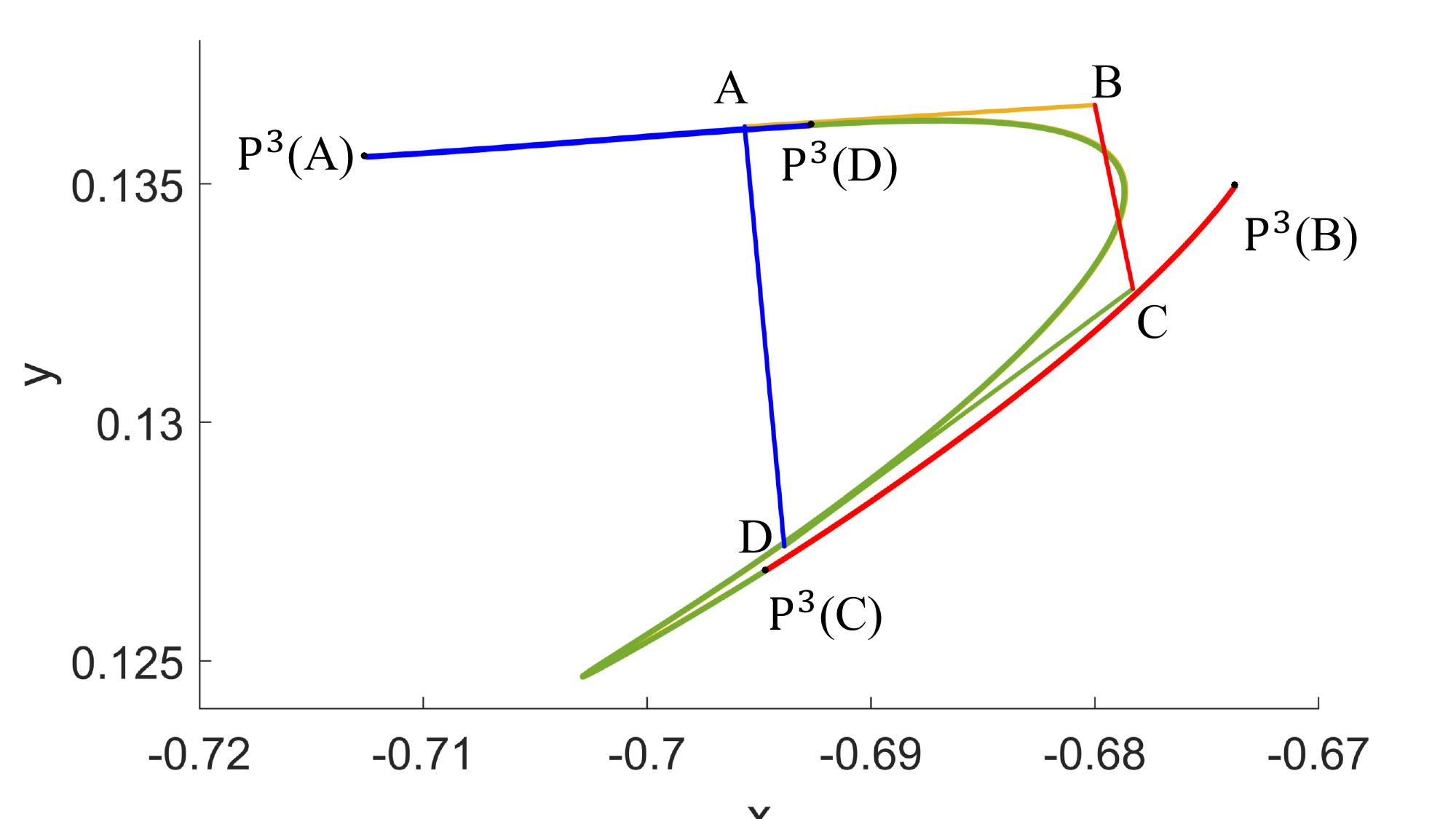}
	\caption{Smale horseshoe of the third return map disappears when $\gamma=1.1929$.}
	\label{fig9:d}
\end{figure}

When $\gamma=1.1929$, a Smale horseshoe is no longer observed (Figure~\ref{fig9:d}), while the incomplete crossing structure suggests the possible existence of a pseudo-horseshoe.

After several attempts, we find a pseudo-horseshoe of the third return map when $\gamma=1.1929$ (Figure~\ref{fig10:a}). The corresponding coordinates of the endpoints of the selected blocks are as follows.
\begin{align*}
	Q_1: &A_1:(-0.69565,0.1362),~B_1:(-0.69293984,0.136277928),\\
	&C_1:(-0.6869833,0.12979031),~ D_1:(-0.6938789,0.1274);\\
	Q_2: &A_2:(-0.691717918,0.136313063),~B_2:(-0.680317,0.136641),\\ &C_2:(-0.6783,0.1328),~D_2:(-0.668~665~587,0.130209).
\end{align*}

To enhance comprehension, we present the geometric configuration of the strange horseshoe in Figure~\ref{fig10:b}, which implies that 
\begin{eqnarray}
	h(P^3_{\gamma=1.1929})\geq \log\frac{1+\sqrt{5}}{2},
\end{eqnarray}
hence,
\begin{eqnarray}
	h(P_{\gamma=1.1929})\geq \frac13\log\frac{1+\sqrt{5}}{2}.	
\end{eqnarray}

\begin{figure}[H]
	\centering
	\includegraphics[width=\textwidth]{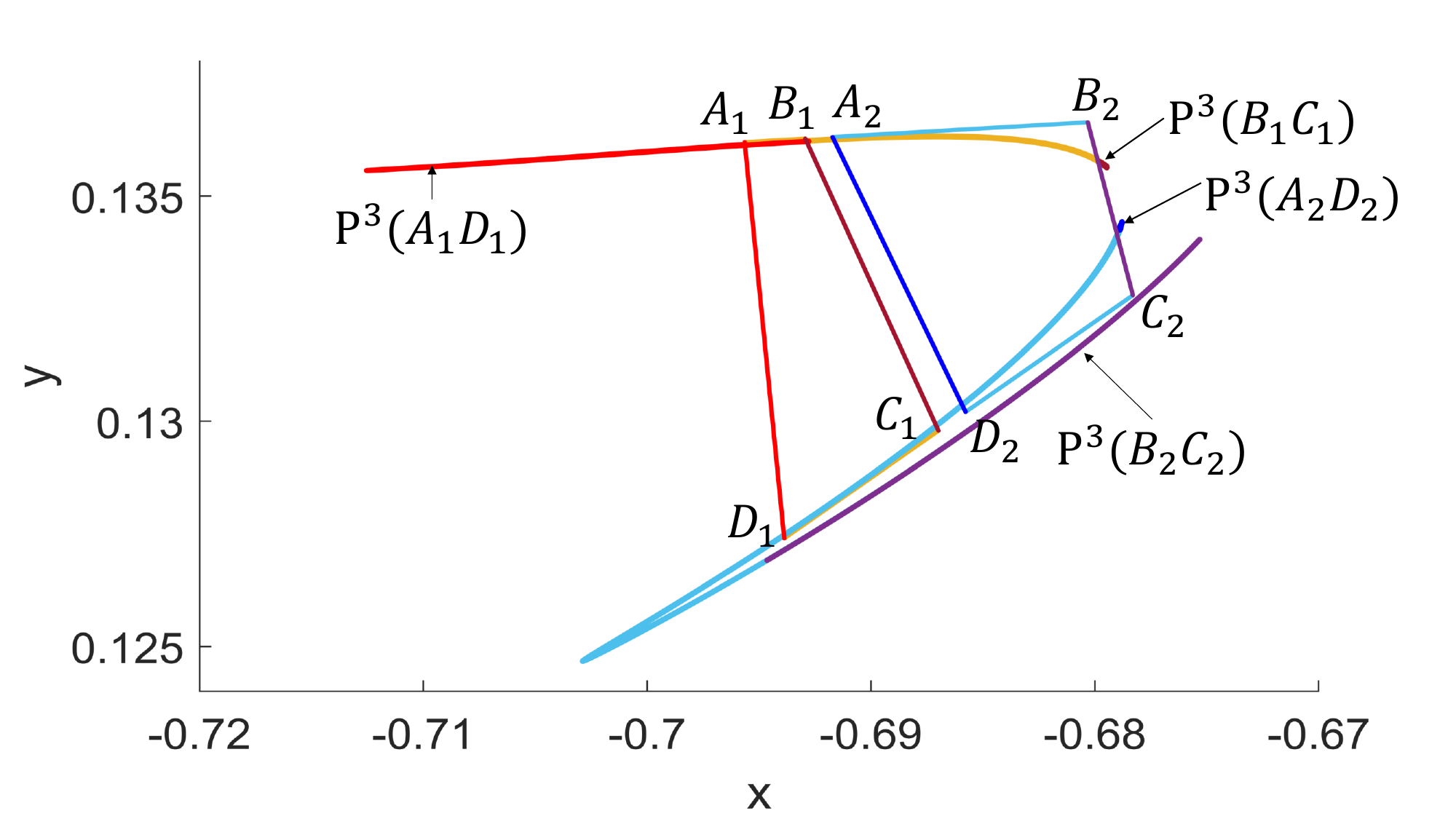}
	\caption{Pseudo-horseshoe of the third return map when $\gamma=1.1929$.}
	\label{fig10:a}
\end{figure}
\begin{figure}[H]
	\centering
	\includegraphics[width=\textwidth]{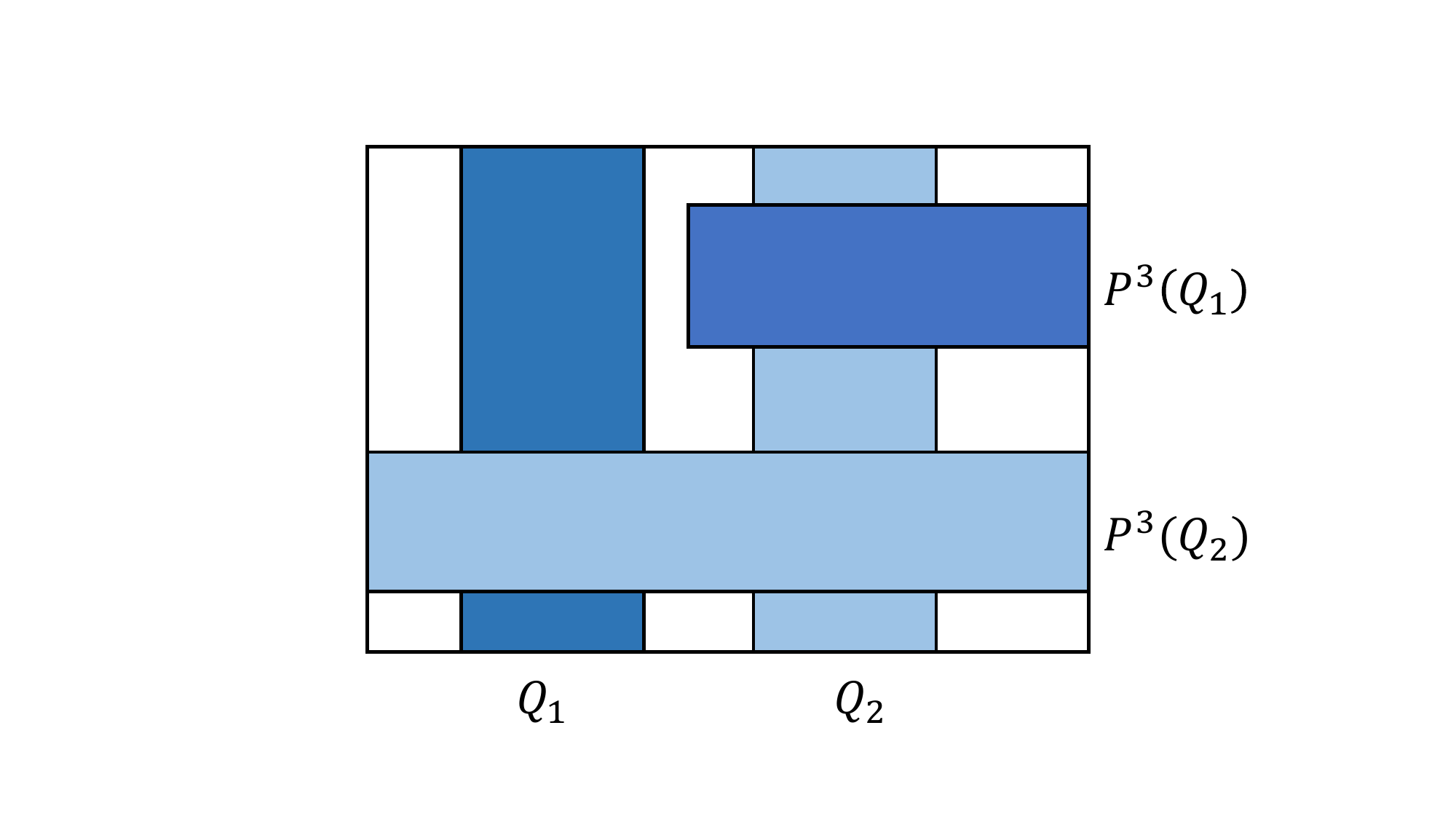}
	\caption{Schematic diagram of the pseudo-horseshoe in Figure~\ref{fig10:a}.}
	\label{fig10:b}
\end{figure}

Through the iteration of the map, we can investigate cases where the Smale horseshoe is absent. Notably, when $\gamma=1.184$, our analysis on the pseudo-horseshoe yields \( h(P_{\gamma=1.184}) \geq \log\frac{1+\sqrt{5}}{2} \approx 0.69424 \), whereas the Smale horseshoe of the second return map indicates \( h(P_{\gamma=1.184}) \geq \frac{1}{2} \log 2 = 0.5 \). This demonstrates that the first return map provides a more accurate lower bound than the second return map does, implying that iteration may lead to a loss of information. Although this issue could be addressed through additional iterations as what we do in Theorem \ref{SHT}, such methods are impractical for numerical experiments.

\section{Attractivity of the chaotic invariant set in the perturbed Duffing system}\label{experiment2}

The result obtained via the Melnikov method, together with the argument developed in Sec.~\ref{experiment1}, provides evidence for the existence of a chaotic invariant set for force amplitudes \(\gamma \in [0.4,1.1929]\). 
However, the question of whether this invariant set is attractive remains open. As remarked in Ref.~\cite{Chen2013}, the basin of attraction of a chaotic invariant set may have Lebesgue measure zero.

In this section, we investigate the asymptotic behavior of the perturbed Duffing system by means of two complementary approaches. The first is the computation of Lyapunov exponents; the second is an examination of whether the orbit converges to a fixed point, inspired by Cauchy's convergence criterion.

More precisely, we assess convergence by monitoring the successive distances
\[
\rho_n=\|(x_{n+1},y_{n+1})-(x_n,y_n)\|.
\]
Fix an integer \(m\), a small threshold \(\varepsilon>0\), and a sufficiently large integer \(N\). If
\[
\rho_{n}\le \varepsilon, \qquad \text{for all } n=N+1,\dots, N+m,
\]
then we deduce that the orbit converges to a fixed point.

We first compute the Lyapunov exponents for orbits with the fixed initial condition \((0.1,0.3)\)  by varying \(\gamma\in[0.4,0.5]\). The results are displayed in Figure~\ref{threshold}.
\begin{figure}[H]
	\centering	
	\includegraphics[height=6cm]{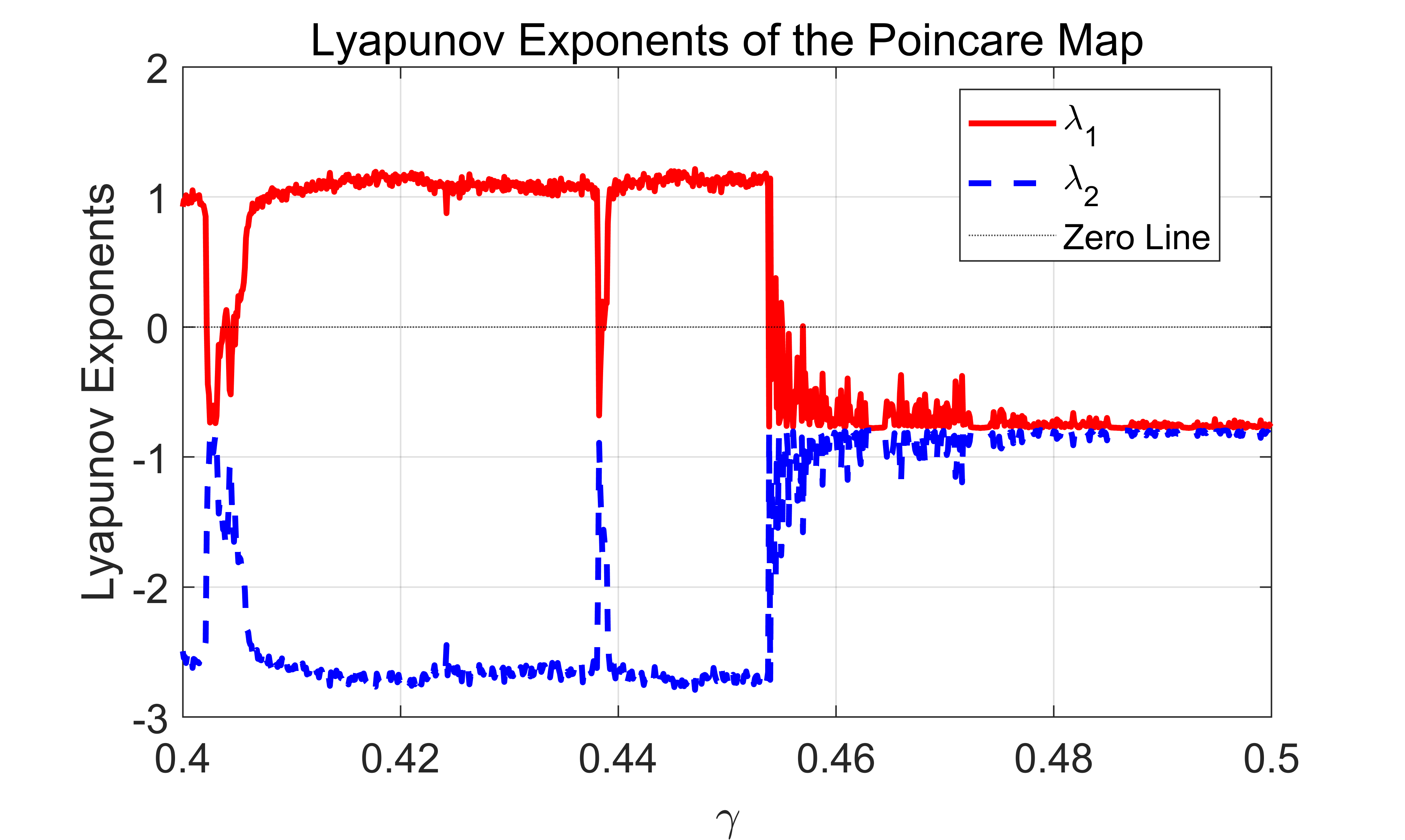}
	\caption{Lyapunov exponents vs. external force amplitude.}
	\label{threshold}
\end{figure}
It is evident that there exists a threshold value \(\gamma_0\in(0.45,0.46)\) such that both Lyapunov exponents become negative for all \(\gamma>\gamma_0\).
Further numerical evidence suggests that almost all the orbits converges to a fixed point, corresponding to a harmonic solution of \eqref{Duffing3}.

Our results further demonstrate that the attractivity of the chaotic invariant set does not vanish abruptly. Rather, the loss of attractivity can be observed through the progressive narrowing of its domain of attraction.
Fixing $N=100$, $m=2$, $\varepsilon=10^{-4}$, we classify points in the region $[-4,4]\times[-5.5, 6]$ based on whether the orbit starting from each such point converges to a fixed point.
For $\gamma\in\{0.42,0.43,0.44,0.45\}$, we apply the same classification procedure and obtain the results shown in Figure~\ref{fig12:a}--\ref{fig12:d}, where  the orbits starting from the blue region will converge to the chaotic invariant set, while those starting from the red region will converge to the fixed point. 

\begin{figure}[H]
	\centering
	\includegraphics[width=\textwidth]{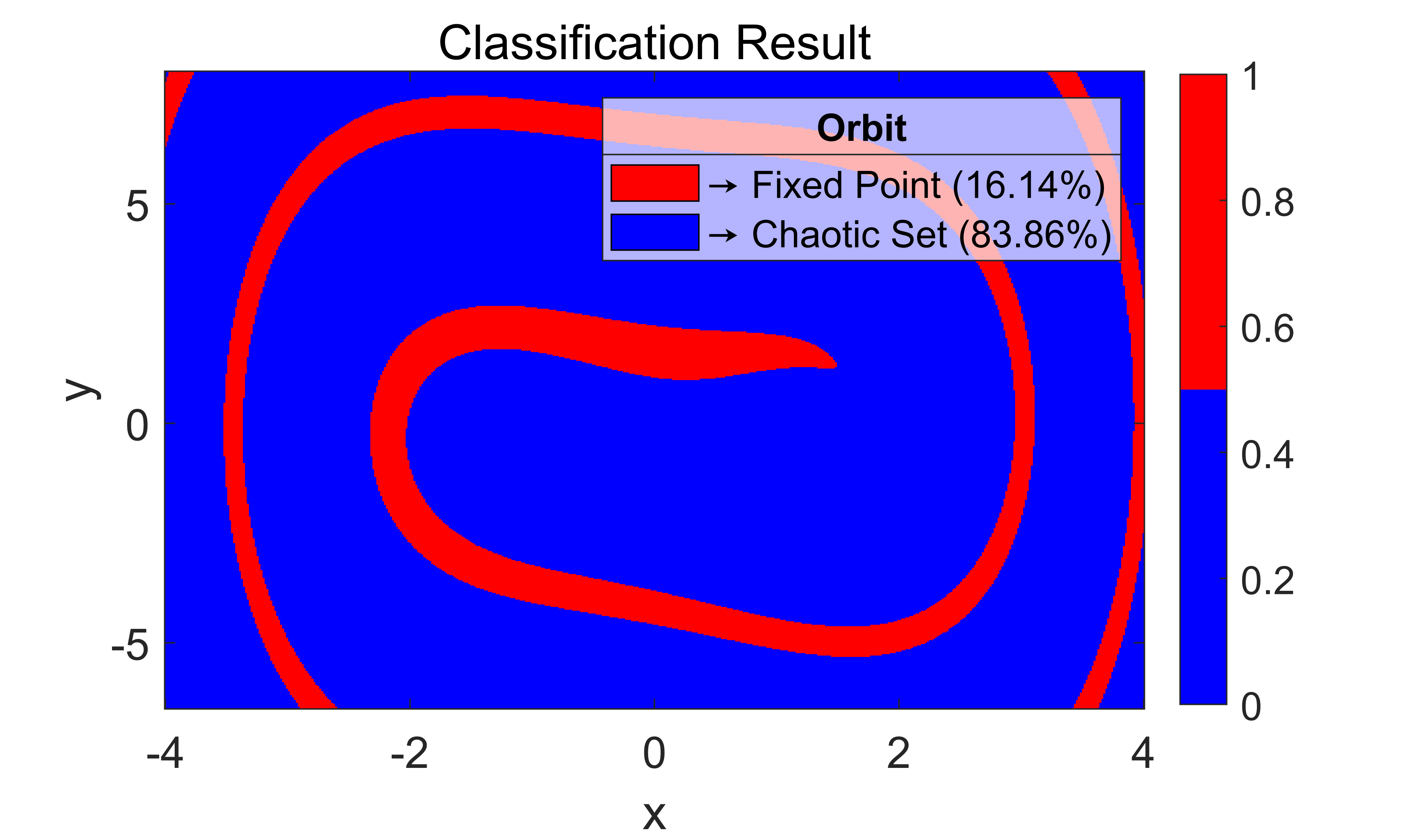}
	\caption{$\gamma=0.42$}
	\label{fig12:a}
\end{figure}
\begin{figure}[H]
	\centering
	\includegraphics[width=\textwidth]{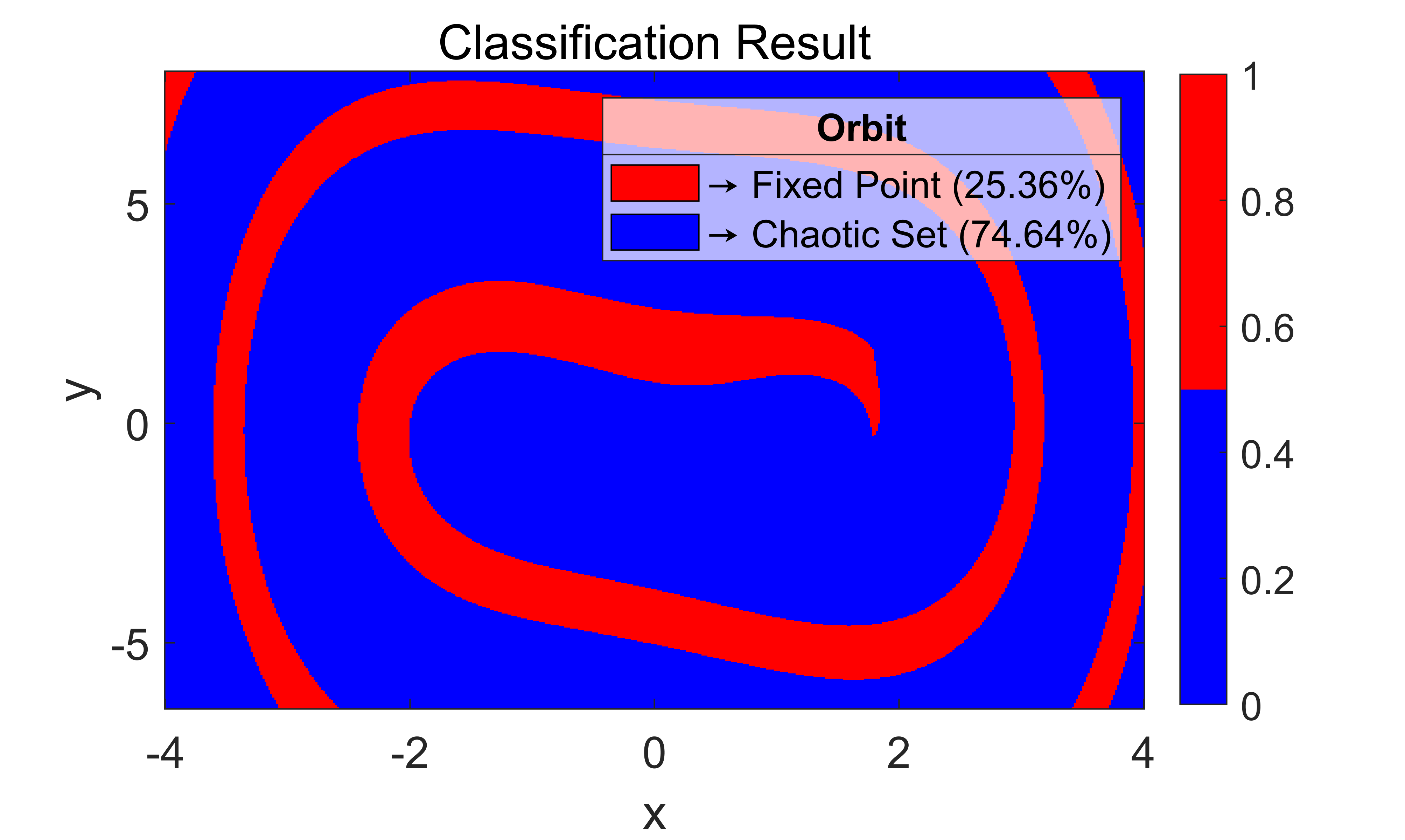}
	\caption{$\gamma=0.43$}
	\label{fig12:b}
\end{figure}
\begin{figure}[H]
	\centering
	\includegraphics[width=\textwidth]{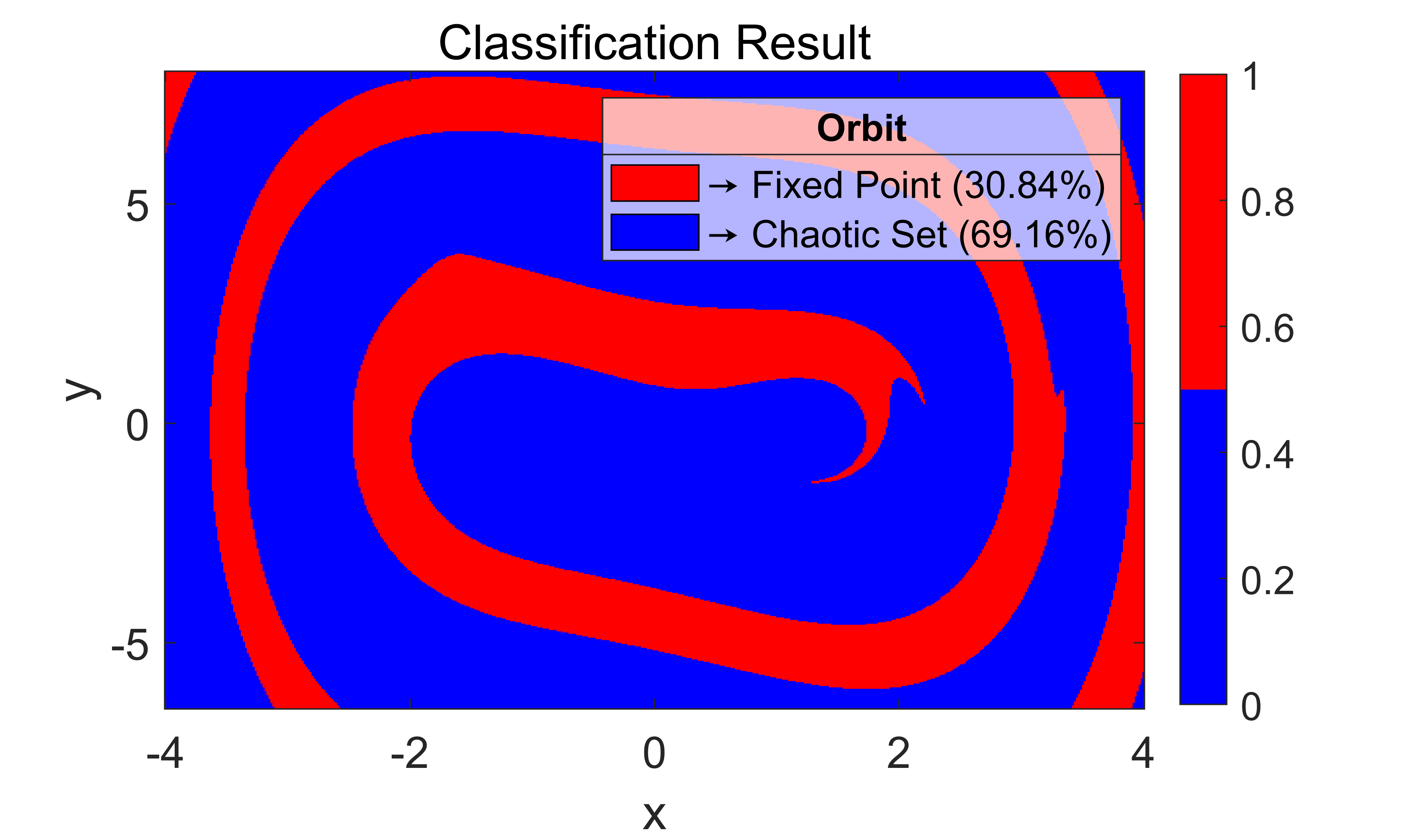}
	\caption{$\gamma=0.44$}
	\label{fig12:c}
\end{figure}
\begin{figure}[H]
	\centering
	\includegraphics[width=\textwidth]{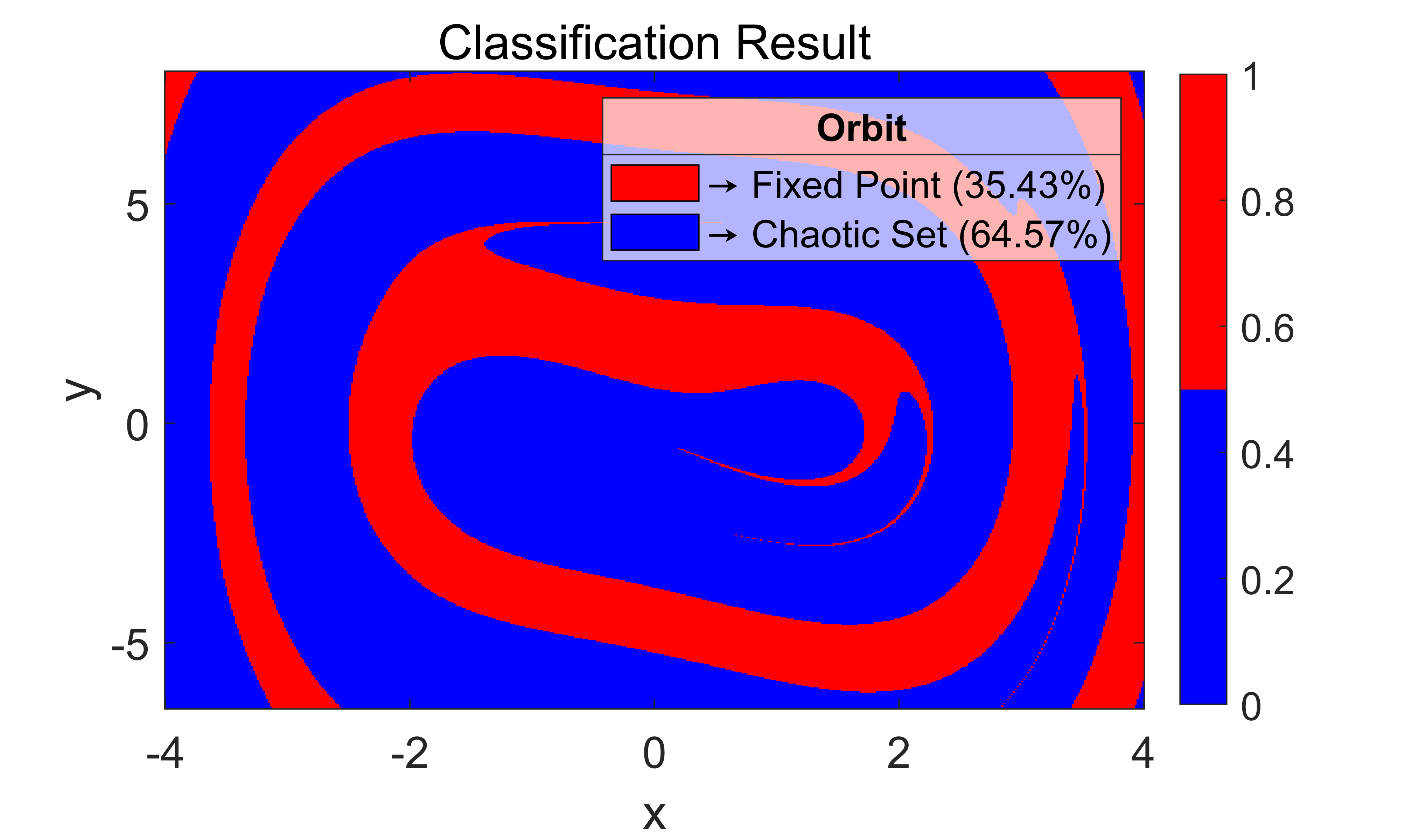}
	\caption{$\gamma=0.45$}
	\label{fig12:d}
\end{figure}

By computing the area ratio of the blue and red regions, we observe that the attraction basin of the chaotic invariant set shrinks as the amplitude $\gamma$ increases.
Consequently, the initial point $(0.1,0.3)$ lies in the basin of attraction of the fixed point for all $\gamma>\gamma_0$.

\begin{remark}
	It remains an open question whether the Lebesgue measure of the basin of attraction of the chaotic invariant set tends to zero as the amplitude increases. Our discussion provides only numerical evidence while a	comprehensive mathematical analysis is required to confirm our numerical findings.
\end{remark}

\begin{remark}
	It is noteworthy that for $\gamma$ near $\gamma_1=0.402~703$ and $\gamma_2=0.438~438$, both exponents are negative. This phenomenon may be attributed to the intricate geometric structure of the attraction basin associated with the chaotic invariant set, which requires further study.
	
\end{remark}

\section{Conclusion}\label{conclusion}

Using the topological horseshoe and Lyapunov exponents, we observe two main changes in  chaotic behavior of the perturbed Duffing system as the amplitude of the force amplitude increases. 
To be specific, the decrease in the lower bound of the topological entropy indicates that the system's chaotic dynamics gradually weaken as the amplitude $\gamma$ increases from $0.4$. 
In addition, the Lyapunov exponents reveal the existence of a threshold value \(\gamma_0 \in (0.45, 0.46)\) above which (and at least up to \(\gamma=1.1921\) in this paper) the chaotic invariant set loses its attractivity, even though the associated Smale horseshoe persists.
Furthermore, our simple classification procedure suggests that the basin of attraction of the chaotic invariant set shrinks as the amplitude \(\gamma\) increases from \(0.42\) to \(0.45\).

\bibliography{TH}

\end{document}